\providecommand{\U}[1]{\protect\rule{.1in}{.1in}}
\newtheorem{theorem}{Theorem}
\newtheorem{definition}[theorem]{Definition}
\newtheorem{proposition}[theorem]{Proposition}
\newtheorem{remark}[theorem]{Remark}
\newenvironment{proof}[1][Proof]{\noindent\textbf{#1.} }{\ \rule{0.5em}{0.5em}}
\begin{document}

\title{Locally Inertial Reference Frames in Lorentzian and Riemann-Cartan Spacetimes}
\author{J. F. T. Giglio$^{(1)}$ and W. A. Rodrigues Jr.$^{(2)}$\\$^{(1)}$ FEA-CEUNSP. 13320-902 - Salto, SP Brazil.\\$^{(2)}$ Institute of Mathematics Statistics and Scientific Computation\\IMECC-UNICAMP \\13083950 Campinas, SP Brazil\\walrod@ime.unicamp.br or walrod@mpc.com.br}
\maketitle

\begin{abstract}
In this paper we scrutinize the concept of locally inertial reference frames
(\textbf{LIRF}) in Lorentzian and Riemann-Cartan spacetime structures. We
present rigorous mathematical definitions for those objects, something that
needs preliminary a clear mathematical distinction between the concepts of
observers, reference frames, naturally adapted coordinate functions to a given
reference frame and which properties may characterize an inertial reference
frame (if any) in\ the Lorentzian and Riemann-Cartan structures. We hope to
have clarified some\ eventual obscure issues associated to the concept of
\textbf{LIRF} appearing in the literature, in particular the relationship
between \textbf{LIRF}s in Lorentzian and Riemann-Cartan spacetimes and
Einstein's most happy though, i.e., the equivalence principle.

\end{abstract}

\section{Introduction}

In this note we investigate if it is possible to have in a general
Riemann-Cartan spacetime a locally inertial reference frame in an analogous
sense in which this concept is defined in a Lorentzian spacetime that models
possible gravitational fields in General Relativity.

To answer the above question which is affirmative in a well defined sense we
are going to recall the precise definitions of the following fundamental concepts:

(i) definition of a \emph{general reference frame} in Lorentzian and
Riemann-Cartan spacetimes;

(ii) definition of \emph{observers} in a Lorentzian or Riemannian spacetime;

(ii) classification of \emph{reference frames} in Lorentzian
spacetimes\footnote{The classification of reference frames will not be
presented in this paper. For the Lorentzian spacetime case, see
\cite{rodcap2007}.};

(iii) definition of an \emph{inertial reference frame} (\textbf{IRF}) in
Minkowski spacetime;

(iv) definition of a \emph{locally inertial\ reference frame }(\textbf{LIRF})
in Lorentzian and Riemann-Cartan spacetimes.

However, to be possible to present precise definitions of the concepts just
mentioned we need to recall some basic facts of differential geometry and fix
some notation. This will be done in Section 2.

\section{Lorentzian and Riemann-Cartan Spacetimes}

To start we introduce a Lorentzian manifold as a pair $\langle
M,\boldsymbol{g}\rangle$ where $M$ is a 4-dimensional manifold and
$\boldsymbol{g\in}\sec T_{0}^{2}M$ is a Lorentz metric of signature $(1,3)$.
We suppose that $\langle M,\boldsymbol{g}\rangle$ is orientable by a global
volume form $\boldsymbol{\tau}_{\boldsymbol{g}}\in\sec%
%TCIMACRO{\tbigwedge \nolimits^{4}}%
%BeginExpansion
{\textstyle\bigwedge\nolimits^{4}}
%EndExpansion
T^{\ast}M$ and also time orientable by an equivalence relation here denoted
$\uparrow$. We next introduce on $M$ two metric compatible connections, namely
$\mathring{D}$ the Levi-Civita connection of $\boldsymbol{g}$ and $D$ a
general Riemann-Cartan connection.

\begin{definition}
We call the pentuple $\langle M,\boldsymbol{g},\mathring{D},\boldsymbol{\tau
}_{\boldsymbol{g}},\uparrow\rangle$ a Lorentzian spacetime and the pentuple
$\langle M,\boldsymbol{g},D,\boldsymbol{\tau}_{\boldsymbol{g}},\uparrow
\rangle$ a Riemann-Cartan spacetime.
\end{definition}

\begin{remark}
Minkowski spacetime structure is denoted by $\langle M\simeq\mathbb{R}%
^{4},\boldsymbol{\eta},\overset{m}{D},\boldsymbol{\tau}_{\boldsymbol{\eta}%
},\uparrow\rangle$.
\end{remark}

Let $U,V,W\subset M$ with $U\cap V\cap W\neq\varnothing$ and introduce the
local charts $(\varphi,U)$ and $(\psi,V)$ and \ $(\chi,W)$ with coordinate
functions $\langle\xi^{\mu}\rangle$, $\langle x^{\mu}\rangle$, $\langle
x^{\prime\mu}\rangle$ respectively. Recall to fix notation that, e.g., given
$p\in M$ and $\mathcal{V}\subset\mathbb{R}^{4}$ we have
\begin{equation}
\psi:V\rightarrow\mathcal{V},\text{ \ \ }\psi(p)=(x^{0}(p),x^{1}%
(p),x^{2}(p),x^{3}(p)). \label{1}%
\end{equation}

The coordinate chart $\psi$ determines a so-called coordinate basis for $TV$
denoted by $\langle e_{\mu}=\partial/\partial x^{\mu}\rangle$. We denoted by
$\langle\vartheta^{\mu}=dx^{\mu}\rangle$ a basis for $T^{\ast}V$ dual to
$\langle e_{\mu}=\partial/\partial x^{\mu}\rangle$, this meaning that
$\vartheta^{\mu}(e_{\nu})=\delta_{\nu}^{\mu}$.

We also write%
\begin{align}
\boldsymbol{g}  &  =g_{\mu\nu}\vartheta^{\mu}\otimes\vartheta^{\nu}=g^{\mu\nu
}\vartheta_{\mu}\otimes\vartheta_{\nu},\nonumber\\
g_{\mu\nu}  &  :=\boldsymbol{g}(e_{\mu},e_{\nu}),\text{ \ }g^{\mu\nu
}:=g(\vartheta^{\mu},\vartheta^{\nu}), \label{2}%
\end{align}
were we denoted by $\langle e^{\mu}\rangle$ the reciprocal basis of $\langle
e_{\mu}\rangle$, i.e., we have
\begin{equation}
\boldsymbol{g}(e^{\mu},e_{\nu})=\delta_{\nu}^{\mu}. \label{3}%
\end{equation}
Moreover, we denote by $g$ $\boldsymbol{\in}\sec T_{2}^{0}M$ the metric on the
cotangent bundle and write%
\begin{equation}
g=g^{\mu\nu}e_{\mu}\otimes e_{\nu}=g_{\mu\nu}e^{\mu}\otimes e^{\nu}.
\label{2a}%
\end{equation}
Of course, $g^{\mu\alpha}g_{\alpha\nu}=\delta_{\nu}^{\mu}$. Also, we denoted
by $\langle\vartheta_{\mu}\rangle$ the reciprocal basis of $\langle
\vartheta^{\mu}\rangle$, i.e., $g(\vartheta^{\mu},\vartheta_{\nu})=\delta
_{\nu}^{\mu}$.\medskip

A curve in $M$ is a mapping
\[
c:\mathbb{R\supset}I\rightarrow M,\text{ \ \ \ }\tau\mapsto c(\tau).
\]
As usual the \emph{tangent vector} \emph{field} to the curve $c$ is denoted by
$c_{\ast}$ or by $\frac{d}{d\tau}$ as more convenient. In the coordinate basis
$\langle e_{\mu}=\partial/\partial x^{\mu}\rangle$ we write%
\begin{equation}
c_{\ast}=\frac{d}{d\tau}=v^{\mu}(\tau)\left.  \partial/\partial x^{\mu
}\right\vert _{\gamma(\tau)} \label{4}%
\end{equation}
In particular we write when $c(0)=p_{o}$,%
\begin{equation}
\left.  c_{\ast}\right\vert _{\tau=o}=v^{\mu}\left.  e_{\mu}\right\vert
_{p_{o}}\in T_{p_{o}}M. \label{4b}%
\end{equation}

To understand the reason for that notation, first take into account that the
coordinate representation of $c$ are the set of functions $x^{\mu}\circ
c(\tau)$ that we denoted using a sloop notation simply by $x^{\mu}(\tau)$.

Now, consider a function $\boldsymbol{f}:V\rightarrow\mathbb{R}$ and denote by
$f=\boldsymbol{f\circ}\psi^{-1}:\mathcal{V}\rightarrow\mathbb{R}$ its
representation as functions of the coordinates $\langle x^{\mu}\rangle$.
Moreover, consider the composite function $\boldsymbol{f\circ}c$ and its
representative
\begin{equation}
f(x^{\mu}(\tau)) \label{5}%
\end{equation}

Then the value of the function $\frac{d}{d\tau}\boldsymbol{f\circ}c$ at
$c(\tau_{o})=p_{o}$ is
\begin{equation}
\left.  \frac{d}{d\tau}\boldsymbol{f\circ}c(\tau)\right\vert _{p_{o}}:=\left.
\frac{d}{d\tau}f(x^{\mu}(\tau))\right\vert _{\tau=0}=v^{\mu}\left.  \partial
f/\partial x^{\mu}\right\vert _{p_{o}}, \label{6}%
\end{equation}
with
\begin{equation}
v^{\mu}:=\left.  \frac{dx^{\mu}(\tau)}{d\tau}\right\vert _{\tau=0}. \label{7}%
\end{equation}

The metric structure permit to classify curves as \emph{timelike},
\emph{spacelike} and \emph{lightlike}. We have
\begin{equation}
\left\{
\begin{array}
[c]{ccc}%
\boldsymbol{g}(c_{\ast},c_{\ast})>0 & \forall\tau\in I & c\text{ is
timelike}\\
\boldsymbol{g}(c_{\ast},c_{\ast})<0 & \forall\tau\in I & c\text{ is
spacelike}\\
\boldsymbol{g}(c_{\ast},c_{\ast})=0 & \forall\tau\in I & c\text{ is
ligthhlike}%
\end{array}
\right.  \label{8}%
\end{equation}

For timelike curve $c:\tau\mapsto c(\tau)$, such that $\boldsymbol{g}(c_{\ast
},c_{\ast})=1$ the parameter $\tau$ is called the \emph{propertime}.

Given $U,V\subset M$ and coordinate functions $\langle\xi^{\mu}\rangle,$
$\langle x^{\mu}\rangle$ covering $U$ and $V$ for the structure $\langle
M,\boldsymbol{g},\mathring{D},\boldsymbol{\tau}_{\boldsymbol{g}}%
,\uparrow\rangle$ and coordinate functions $\langle\varsigma^{\mu}\rangle,$
$\langle x^{\mu}\rangle$ covering $U$ and $V$ for the structure $\langle
M,\boldsymbol{g},D,\boldsymbol{\tau}_{\boldsymbol{g}},\uparrow\rangle$ we fix
here the following notation%

\begin{align}
D_{e_{\mu}}\vartheta^{\nu}  &  :=-\Gamma_{\cdot\mu\alpha}^{\nu\cdot\cdot
}\vartheta^{\alpha},\text{ \ \ }D_{e_{\mu}}e_{\nu}:=\Gamma_{\cdot\mu\nu
}^{\alpha\cdot\cdot}e_{\alpha},\nonumber\\
D_{\boldsymbol{\partial/\partial\varsigma}^{\mu}}d\varsigma^{\nu}  &
:=-\mathbf{\Gamma}_{\cdot\mu\alpha}^{\nu\cdot\cdot}d\varsigma^{\alpha},\text{
\ \ }D_{_{\partial/\partial\varsigma^{\mu}}}\partial/\partial\varsigma^{\nu
}:=\mathbf{\Gamma}_{\cdot\mu\nu}^{\alpha\cdot\cdot}\partial/\partial
\varsigma^{\alpha},\nonumber\\
\mathring{D}_{e_{\mu}}\vartheta^{\nu}  &  :=-\mathring{\Gamma}_{\cdot\mu
\alpha}^{\nu\cdot\cdot}\vartheta^{\alpha},\text{ \ \ }\mathring{D}_{e_{\mu}%
}e_{\nu}:=\mathring{\Gamma}_{\cdot\mu\nu}^{\alpha\cdot\cdot}e_{\alpha
},\nonumber\\
\mathring{D}_{\boldsymbol{\partial/\partial\xi}^{\mu}}d\xi^{\nu}  &
:=-\boldsymbol{\mathring{\Gamma}}_{\cdot\mu\alpha}^{\nu\cdot\cdot}d\xi
^{\alpha},\text{ \ \ }\mathring{D}_{_{\boldsymbol{\partial/\partial\xi}^{\mu}%
}}\partial/\partial\boldsymbol{\xi}^{\nu}=\mathbf{\mathring{\Gamma}}_{\cdot
\mu\nu}^{\alpha\cdot\cdot}\partial/\partial\boldsymbol{\xi}^{\alpha}.
\label{Ds}%
\end{align}

For the connection coefficients in coordinate basis $\langle\partial/\partial
x^{\prime\mu}\rangle,\langle dx^{\prime\mu}\rangle$ we use\textbf{ }%
$\Gamma_{\cdot\mu\alpha}^{\prime\nu\cdot\cdot}$. Finally for an arbitrary
basis $\langle\boldsymbol{e}_{\mu}\rangle$ for $T(U\cap V\cap W)$ and dual
basis $\langle\boldsymbol{\theta}^{\mu}\rangle$ for $T^{\ast}(U\cap V\cap W)$
such that%
\begin{equation}
\lbrack\boldsymbol{e}_{\mu},\boldsymbol{e}_{\nu}]=\boldsymbol{c}_{\cdot\mu\nu
}^{\alpha\cdot\cdot}\boldsymbol{e}_{\alpha} \label{com}%
\end{equation}
we write%
\begin{align}
D_{\mathbf{e}_{\mu}}\boldsymbol{\theta}^{\nu}  &  :=-\boldsymbol{\gamma
}_{\cdot\mu\alpha}^{\nu\cdot\cdot}\boldsymbol{\theta}^{\alpha}%
\boldsymbol{,\ \ }D_{\mathbf{e}_{\mu}}\boldsymbol{e}_{\nu}:=\boldsymbol{\gamma
}_{\cdot\mu\nu}^{\alpha\cdot\cdot}\boldsymbol{e}_{\alpha},\nonumber\\
\mathring{D}_{\boldsymbol{e}_{\mu}}\boldsymbol{\theta}^{\nu}  &
:=-\boldsymbol{\mathring{\gamma}}_{\cdot\mu\alpha}^{\nu\cdot\cdot
}\boldsymbol{\theta}^{\alpha}\boldsymbol{,\ \ }\mathring{D}_{\boldsymbol{e}%
_{\mu}}\boldsymbol{e}_{\nu}:=\boldsymbol{\mathring{\gamma}}_{\cdot\mu\nu
}^{\alpha\cdot\cdot}\boldsymbol{e}_{\alpha}. \label{Dsb}%
\end{align}

\subsection{Relation between $\Gamma_{\cdot\mu\nu}^{\lambda\cdot\cdot}$ and
$\mathring{\Gamma}_{\cdot\mu\nu}^{\lambda\cdot\cdot}$}

We have that\footnote{See, e.g., \cite{gr2011}.}:%

\begin{equation}
\Gamma_{\cdot\mu\nu}^{\lambda\cdot\cdot}=\mathring{\Gamma}_{\cdot\mu\nu
}^{\lambda\cdot\cdot}+K_{\cdot\mu\nu}^{\lambda\cdot\cdot} \label{3a}%
\end{equation}
where
\begin{align}
K_{\cdot\mu\nu}^{\lambda\cdot\cdot}  &  :=\frac{1}{2}(T_{\cdot\mu\nu}%
^{\lambda\cdot\cdot}+S_{\cdot\mu\nu}^{\lambda\cdot\cdot})\nonumber\\
&  =\frac{1}{2}g^{\lambda\beta}g_{\beta\alpha}T_{\cdot\mu\nu}^{\alpha
\cdot\cdot}-\frac{1}{2}g^{\lambda\sigma}g_{\mu\alpha}T_{\cdot\nu\sigma
}^{\alpha\cdot\cdot}-\frac{1}{2}g^{\lambda\sigma}g_{\nu\alpha}T_{\cdot
\mu\sigma}^{\alpha\cdot\cdot}\text{ }\label{3b}\\
&  =\frac{1}{2}(T_{\cdot\mu\nu}^{\lambda\cdot\cdot\hspace{0.01in}}-T_{\mu
\nu\cdot}^{\cdot\cdot\lambda}+T_{\nu\cdot\mu}^{\cdot\lambda\cdot}).\nonumber
\end{align}
and%

\begin{align}
T_{\cdot\mu\nu}^{\lambda\cdot\cdot}  &  =\Gamma_{\cdot\mu\nu}^{\lambda
\cdot\cdot}-\Gamma_{\cdot\nu\mu}^{\lambda\cdot\cdot}=-T_{\cdot\nu\mu}%
^{\lambda\cdot\cdot},\label{3c}\\
S_{\cdot\mu\nu}^{\lambda\cdot\cdot}  &  =-g^{\lambda\sigma}(g_{\mu\alpha
}T_{\cdot\nu\sigma}^{\alpha\cdot\cdot}+g_{\nu\alpha}T_{\cdot\mu\sigma}%
^{\alpha\cdot\cdot})=S_{\cdot\nu\mu}^{\lambda\cdot\cdot}\text{ }. \label{3d}%
\end{align}

\subsection{Torsion and Curvature Tensors}

\begin{definition}
\label{torsion+curv} Let $\mathbf{u,v\in}\sec TM$. The \textit{torsion\/and
curvature operations} of a connection $\nabla$ are respectively the mappings:
$\mathbf{\tau:}\sec(TM\otimes TM)\rightarrow\sec TM$ and $\mathbf{\rho}%
:\sec(TM\otimes TM)\rightarrow\mathrm{End}(\sec TM)$ given by
\begin{align}
\mathbf{\tau}(\mathbf{u,v})  &  =\nabla_{\mathbf{u}}\mathbf{v}-\nabla
_{\mathbf{v}}\mathbf{u}-[\mathbf{u,v}],\label{top}\\
\mathbf{\rho(u,v)}  &  =\nabla_{\mathbf{u}}\nabla_{\mathbf{v}}-\nabla
_{\mathbf{v}}\nabla_{\mathbf{u}}-\nabla_{\lbrack\mathbf{u,v}]}. \label{cop}%
\end{align}

\end{definition}

\begin{definition}
Let $\mathbf{u,v,w}\in\sec TM$ and $\alpha\in\sec\bigwedge^{1}T^{\ast}M$. The
torsion and curvature tensors of a connection $\nabla$ are the mappings
$\boldsymbol{T}:\sec(\bigwedge^{1}T^{\ast}M\otimes TM\otimes TM)\rightarrow
\mathbb{R}$ and $\boldsymbol{R}:\sec(TM\otimes\bigwedge^{1}T^{\ast}M\otimes
TM\otimes TM)\rightarrow\mathbb{R}$ given by
\end{definition}

\begin{align}
\boldsymbol{T}(\alpha,\mathbf{u,v})  &  =\alpha\left(  \mathbf{\tau
}(\mathbf{u,v})\right)  ,\label{to op}\\
\boldsymbol{R}(\mathbf{w},\alpha,\mathbf{u,v})  &  =\alpha(\mathbf{\rho
(u,v)w}), \label{curv op}%
\end{align}
In an arbitrary basis $\langle\boldsymbol{e}_{\mu}\rangle$ for $T(U\cap V\cap
W)$ and dual basis $\langle\boldsymbol{\theta}^{\mu}\rangle$ for $T^{\ast
}(U\cap V\cap W)$ we have%

\begin{equation}
\boldsymbol{T}(\boldsymbol{\theta}^{\lambda},\boldsymbol{e}_{\mu
},\boldsymbol{e}_{\nu}):=\boldsymbol{T}_{\cdot\mu\nu}^{\lambda\cdot\cdot
}=\boldsymbol{\gamma}_{\cdot\mu\nu}^{\lambda\cdot\cdot}-\boldsymbol{\gamma
}_{\cdot\nu\mu}^{\lambda\cdot\cdot}-\boldsymbol{c}_{\cdot\nu\mu}^{\lambda
\cdot\cdot}\text{ .} \label{TORSIONC}%
\end{equation}%
\begin{align}
\boldsymbol{\mathring{R}}_{\mu\cdot\alpha\beta}^{\cdot\lambda\cdot\cdot}  &
:=\boldsymbol{\mathring{R}}(\boldsymbol{e}_{\mu},\boldsymbol{\theta}^{\lambda
},\boldsymbol{e}_{\alpha},\boldsymbol{e}_{\beta})\nonumber\\
&  =\boldsymbol{e}_{\alpha}(\boldsymbol{\mathring{\gamma}}_{\cdot\beta\mu
}^{\lambda\cdot\cdot})-\boldsymbol{e}_{\beta}(\boldsymbol{\mathring{\gamma}%
}_{\cdot\alpha\mu}^{\lambda\cdot\cdot})-\boldsymbol{\mathring{\gamma}}%
_{\cdot\alpha\kappa}^{\lambda\cdot\cdot}\boldsymbol{\mathring{\gamma}}%
_{\cdot\beta\mu}^{\kappa\cdot\cdot}-\boldsymbol{\mathring{\gamma}}%
_{\cdot\kappa\beta}^{\lambda\cdot\cdot}\boldsymbol{\mathring{\gamma}}%
_{\cdot\alpha\mu}^{\kappa\cdot\cdot}-\boldsymbol{c}_{\cdot\alpha\beta}%
^{\kappa\cdot\cdot}\boldsymbol{\mathring{\gamma}}_{\cdot\kappa\mu}%
^{\lambda\cdot\cdot}\text{ },\nonumber\\
\boldsymbol{R}_{\mu\cdot\alpha\beta}^{\cdot\lambda\cdot\cdot}  &
:=\boldsymbol{R}(\boldsymbol{e}_{\mu},\boldsymbol{\theta}^{\lambda
},\boldsymbol{e}_{\alpha},\boldsymbol{e}_{\beta})\nonumber\\
&  =\boldsymbol{e}_{\alpha}\boldsymbol{(\gamma}_{\cdot\beta\mu}^{\lambda
\cdot\cdot}\boldsymbol{)}-\boldsymbol{e}_{\beta}\boldsymbol{(\gamma}%
_{\cdot\alpha\mu}^{\lambda\cdot\cdot}\boldsymbol{)}-\boldsymbol{\gamma}%
_{\cdot\alpha\kappa}^{\lambda\cdot\cdot}\boldsymbol{\gamma}_{\cdot\beta\mu
}^{\kappa\cdot\cdot}-\boldsymbol{\gamma}_{\cdot\beta\kappa}^{\lambda\cdot
\cdot}\boldsymbol{\gamma}_{\cdot\alpha\mu}^{\kappa\cdot\cdot}-\boldsymbol{c}%
_{\cdot\alpha\beta}^{\kappa\cdot\cdot}\boldsymbol{\gamma}_{\cdot\kappa\mu
}^{\lambda\cdot\cdot}\text{ }. \label{curc}%
\end{align}

\subsubsection{Relation Between the Curvature Tensors of $D$ and $\mathring
{D}$}

The components of the curvature tensors relative to the coordinate basis
associated to the coordinates $\langle x^{\mu}\rangle$ covering $V$ are:
\begin{equation}
R_{\mu\cdot\alpha\beta}^{\cdot\lambda\cdot\cdot}=\mathring{R}_{\mu\cdot
\alpha\beta}^{\cdot\lambda\cdot\cdot}+J_{\mu\cdot\lbrack\alpha\beta]}%
^{\cdot\lambda\cdot\cdot},
\end{equation}
where
\begin{align}
J_{\mu\cdot\alpha\beta\cdot}^{\cdot\lambda\cdot\cdot}  &  :=\mathring
{D}_{\alpha}K_{\beta\mu\cdot}^{\lambda\cdot\cdot}-K_{\cdot\alpha\mu}%
^{\sigma\cdot\cdot}K_{\cdot\beta\sigma}^{\lambda\cdot\cdot},\nonumber\\
J_{\mu\cdot\lbrack\alpha\beta]}^{\cdot\lambda\cdot\cdot}  &  =J_{\mu
\cdot\alpha\beta}^{\cdot\lambda\cdot\cdot}-J_{\mu\cdot\beta\alpha}%
^{\cdot\lambda\cdot\cdot}\text{ }. \label{17}%
\end{align}

We need also the

\begin{proposition}
\label{existthetas} Let $\boldsymbol{Z}\in\sec TV$ be a timelike vector field
such that\texttt{ }$\boldsymbol{g}(\boldsymbol{Z},\boldsymbol{Z})=1$. Then,
there exist, in a coordinate neighborhood $V$, three spacelike vector fields
$\boldsymbol{e}_{\mathbf{i}}$ which together with $\boldsymbol{Z}$ form an
orthogonal moving frame for $x\in V\subset M$ \emph{\cite{choquet}.}
\end{proposition}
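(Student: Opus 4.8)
The plan is to build the required frame by a smooth Gram--Schmidt orthogonalization performed over a (possibly shrunk) coordinate neighborhood, exploiting the fact that the $\boldsymbol{g}$-orthogonal complement of a unit timelike field is spacelike. First I would fix a point $p_{0}\in V$ and note that, since the coordinate fields $\langle e_{\mu}=\partial/\partial x^{\mu}\rangle$ form a basis of $T_{p_{0}}M$ and $\boldsymbol{Z}(p_{0})\neq 0$, I can select three of them, say $e_{j_{1}},e_{j_{2}},e_{j_{3}}$, so that $\{\boldsymbol{Z},e_{j_{1}},e_{j_{2}},e_{j_{3}}\}$ is a basis of $T_{p_{0}}M$. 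The determinant of the components of these four fields in the coordinate basis is continuous and nonzero at $p_{0}$, hence nonzero on a smaller neighborhood $V^{\prime}\subset V$; on $V^{\prime}$ the four fields stay pointwise linearly independent.

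Next I would run Gram--Schmidt with $\boldsymbol{Z}$ in the leading slot. Setting $\boldsymbol{f}_{0}:=\boldsymbol{Z}$ and, for $i=1,2,3$,
\begin{equation}
\boldsymbol{f}_{i}:=e_{j_{i}}-\sum_{k=0}^{i-1}\frac{\boldsymbol{g}(e_{j_{i}},\boldsymbol{f}_{k})}{\boldsymbol{g}(\boldsymbol{f}_{k},\boldsymbol{f}_{k})}\boldsymbol{f}_{k},
\end{equation}
one checks inductively that $\boldsymbol{g}(\boldsymbol{f}_{i},\boldsymbol{f}_{k})=0$ for $k<i$ and, in particular, that $\boldsymbol{g}(\boldsymbol{f}_{i},\boldsymbol{Z})=0$ for every $i\geq 1$. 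The three fields $\boldsymbol{e}_{\mathbf{i}}:=\boldsymbol{f}_{i}$ are then mutually orthogonal and orthogonal to $\boldsymbol{Z}$, and an optional rescaling $\boldsymbol{f}_{i}\mapsto\boldsymbol{f}_{i}/\sqrt{-\boldsymbol{g}(\boldsymbol{f}_{i},\boldsymbol{f}_{i})}$ turns $\{\boldsymbol{Z},\boldsymbol{e}_{\mathbf{1}},\boldsymbol{e}_{\mathbf{2}},\boldsymbol{e}_{\mathbf{3}}\}$ into an orthonormal moving frame on $V^{\prime}$.

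The point that must be argued carefully --- and which is the real obstacle in Lorentzian rather than Riemannian signature --- is that every denominator $\boldsymbol{g}(\boldsymbol{f}_{k},\boldsymbol{f}_{k})$ is nonvanishing, so that the quotients above define genuine smooth fields and no $\boldsymbol{f}_{i}$ comes out null. I would settle this structurally: since $\boldsymbol{g}(\boldsymbol{Z},\boldsymbol{Z})=1>0$ and the signature is $(1,3)$, the orthogonal complement $\boldsymbol{Z}^{\perp}$ is a rank-three distribution on which $\boldsymbol{g}$ is negative definite. By construction each $\boldsymbol{f}_{i}$ with $i\geq 1$ lies in $\boldsymbol{Z}^{\perp}$ and is nonzero (the pairs $\{\boldsymbol{Z},e_{j_{i}}\}$ remaining independent on $V^{\prime}$), whence $\boldsymbol{g}(\boldsymbol{f}_{i},\boldsymbol{f}_{i})<0$ and each $\boldsymbol{e}_{\mathbf{i}}$ is spacelike, while the leading denominator $\boldsymbol{g}(\boldsymbol{f}_{0},\boldsymbol{f}_{0})=1$ is harmless. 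Smoothness of the $\boldsymbol{e}_{\mathbf{i}}$ then follows because they are rational expressions in the smooth data $\boldsymbol{Z}$, $e_{j_{i}}$ and $\boldsymbol{g}$ with nowhere-vanishing denominators, which is precisely the content recorded in \cite{choquet}.
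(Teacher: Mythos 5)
Your argument is correct and complete; it differs from the paper's proof mainly in where the work is placed. The paper argues at the level of the coframe: it sets $\boldsymbol{\theta}^{0}=Z_{\mu}dx^{\mu}$, forms $\gamma_{\mu\nu}=g_{\mu\nu}-Z_{\mu}Z_{\nu}$, asserts ``due to the hyperbolicity of the manifold'' that the residual definite form can be written as $\sum_{i=1}^{3}\boldsymbol{\theta}^{i}\otimes\boldsymbol{\theta}^{i}$ as in Eq.~(\ref{PRa2}), and then takes $\boldsymbol{e}_{0}=\boldsymbol{Z}$ and $\boldsymbol{e}_{i}$ dual to the $\boldsymbol{\theta}^{i}$. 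The content of that unproved assertion --- that a smooth definite symmetric form admits a smooth local orthonormal coframe --- is exactly what your pointwise Gram--Schmidt supplies, and your observation that no denominator $\boldsymbol{g}(\boldsymbol{f}_{k},\boldsymbol{f}_{k})$ vanishes because $\boldsymbol{g}$ is negative definite on $\boldsymbol{Z}^{\perp}$ is the same structural fact the paper invokes under the name ``hyperbolicity.'' What your route buys is an explicit, self-contained construction with the signature issue handled in the open (your normalization by $\sqrt{-\boldsymbol{g}(\boldsymbol{f}_{i},\boldsymbol{f}_{i})}$ also quietly corrects a sign that the paper's Eq.~(\ref{PRa2}) gets wrong, since $\gamma_{\mu\nu}dx^{\mu}\otimes dx^{\nu}$ equals $-\sum_{i}\boldsymbol{\theta}^{i}\otimes\boldsymbol{\theta}^{i}$, not $+$), at the modest price of shrinking $V$ to a subneighbourhood $V^{\prime}$ on which the chosen coordinate fields remain independent of $\boldsymbol{Z}$ --- harmless for a purely local statement. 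What the paper's route buys is brevity and a metric-level decomposition that it immediately reuses when introducing the projection tensor $\boldsymbol{h}$ in Eq.~(\ref{h}).
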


\begin{proof}
Suppose that the metric of the manifold in a chart ($\psi,V$) with coordinate
functions $\langle x^{\mu}\rangle$ is \texttt{\ }$\boldsymbol{g}=g_{\mu\nu
}dx^{\mu}\otimes dx^{\nu}$. Let $\boldsymbol{Z}=(Q^{\mu}\partial/\partial
x^{\mu})\in\sec TV$ be an arbitrary reference frame and $\boldsymbol{\alpha
}_{\boldsymbol{Z}}=\boldsymbol{g}(\boldsymbol{Z},$ $)=Z_{\mu}dx^{\mu},$
$Z_{\mu}=g_{\mu\nu}Z^{\nu}$ Then, $g_{\mu\nu}Z^{\mu}Z^{\nu}=1$. Now, define
\begin{gather}
\boldsymbol{\theta}^{0}=(\boldsymbol{\alpha}_{\boldsymbol{Z}})_{\mu}dx^{\mu
}=Z_{\mu}dx^{\mu},\nonumber\\
\gamma_{\mu\nu}=g_{\mu\nu}-Z_{\mu}Z_{\nu}. \label{PRa1}%
\end{gather}
Then the metric $\mathtt{g}$ can be written due to the hyperbolicity of the
manifold as%

\begin{gather}
\boldsymbol{g}=\eta_{\mu\nu}\boldsymbol{\theta}^{\mu}\otimes\boldsymbol{\theta
}^{\mathbf{\nu}},\nonumber\\
\sum\limits_{i=1}^{3}\boldsymbol{\theta}^{i}\otimes\boldsymbol{\theta}%
^{i}=\gamma_{\mu\nu}(x)dx^{\mu}\otimes dx^{\nu}. \label{PRa2}%
\end{gather}
Now, call $\boldsymbol{e}_{0}=\boldsymbol{Z}$ and take $\boldsymbol{e}_{i}$
such that $\boldsymbol{\theta}^{i}(\boldsymbol{e}_{j})=$ $\delta_{j}^{i}$. It
follows immediately that $\boldsymbol{g}(\boldsymbol{e}_{\mu},\boldsymbol{e}%
_{\nu})=\eta_{\mu\nu}$, $\mu,\nu=0,1,2,3\mathbf{.}$
\end{proof}

\section{Observers and Reference Frames}

\begin{definition}
An observer in a Lorentzian structure $\langle M,\boldsymbol{g}\rangle$ is a
timelike curve $\gamma$ pointing to the future such that $\boldsymbol{g}%
(\gamma_{\ast},\gamma_{\ast})=1$.
\end{definition}

\begin{definition}
A reference frame in $U\cap V\cap W\subset M$\ in a Lorentzian structure
$\langle M,\boldsymbol{g}\rangle$ is a timelike vector field $\boldsymbol{Z}$
\emph{($\boldsymbol{g(Z,Z)=}1$) }such that each one of its integral lines is
an observer.
\end{definition}

So, if $\sigma$ is an integral line of $\boldsymbol{Z}$, its parametric
equations are
\begin{equation}
\frac{d(x^{\mu}\circ\sigma(\tau))}{d\tau}=Z^{\mu}(x^{\alpha}(\tau)). \label{9}%
\end{equation}

\begin{definition}
A naturally adapted coordinate system $\langle x^{\mu}\rangle$ to a reference
frame $\boldsymbol{Z}\in\sec TV$ \emph{(}denoted\ $\langle\mathrm{nacs}%
|\boldsymbol{Z}\rangle$\emph{)} is one where the spacelike components of
$\boldsymbol{Z}$ are null. Note that such a chart always exist \emph{\cite{bg}%
}.
\end{definition}

\begin{remark}
The definition of a reference frame in a Lorentzian spacetime or in a
Riemann-Cartan spacetime is the same as above since that definition does not
depends on the additional objects entering these structures.
\end{remark}

\subsection{References Frames in $\langle M,\boldsymbol{g},\mathring
{D},\boldsymbol{\tau}_{\boldsymbol{g}},\uparrow\rangle$ and $\langle
M,\boldsymbol{g},D,\boldsymbol{\tau}_{\boldsymbol{g}},\uparrow\rangle$}

Given a reference frame $\boldsymbol{Z}$ in $U\cap V\cap W\subset M$, consider
the physically equivalent $1$-form field
\begin{equation}
\boldsymbol{\alpha}=\boldsymbol{g}(\boldsymbol{Z},\text{ }). \label{alfa}%
\end{equation}
Then we have:

\begin{equation}
\mathring{D}\boldsymbol{\alpha}=\boldsymbol{\mathring{a}}\otimes
\boldsymbol{\alpha}+\boldsymbol{\mathring{\omega}}+\boldsymbol{\mathring
{\sigma}}+\frac{1}{3}\mathfrak{\mathring{E}}\boldsymbol{h}, \label{PR12}%
\end{equation}
where
\begin{equation}
\boldsymbol{h}=\boldsymbol{g}-\boldsymbol{\alpha\otimes\alpha} \label{h}%
\end{equation}
is the projection tensor, $\boldsymbol{\alpha}$ is the (form)
\textit{acceleration }of $\boldsymbol{Z}$, $\boldsymbol{\mathring{\omega}}$ is
the \textit{rotation}\ tensor \emph{(}or \textit{vortex}\emph{)} of
$\boldsymbol{Z}$, $\boldsymbol{\mathring{\sigma}}$ is the \textit{shear} of
$\boldsymbol{Z}$ and $\mathfrak{\mathring{E}}$ is the \textit{expansion ratio}
of $\boldsymbol{Z}$. In a coordinate chart ($\psi,V$) with coordinate
functions $x^{\mu}$, writing $\boldsymbol{Z}=Z^{\mu}\partial/\partial x^{\mu}$
and $\boldsymbol{h}=(g_{\mu\nu}-Z_{\mu}Z_{\nu})dx^{\mu}\otimes dx^{\nu}$ we
have
\begin{gather}
\boldsymbol{\mathring{a}}=\boldsymbol{g}(\mathring{D}_{_{\boldsymbol{Z}}%
}\boldsymbol{Z},\text{ })=\mathring{D}_{_{\boldsymbol{Z}}}\boldsymbol{\alpha
}\nonumber\\
\mathring{\omega}_{\alpha\beta}=Z_{\left[  \mu;\nu\right]  }h_{\alpha}^{\mu
}h_{\beta}^{\nu},\nonumber\\
\mathring{\sigma}_{\alpha\beta}=[Z_{\left(  \mu;\nu\right)  }-\frac{1}%
{3}\mathfrak{\mathring{E}}h_{\mu\nu}]h_{\alpha}^{\mu}h_{\beta}^{\nu
},\nonumber\\
\mathfrak{\mathring{E}}=\mathring{D}_{\mu}\text{ }Z^{\mu}. \label{PR14}%
\end{gather}

\begin{proof}
The decomposition given by Eq.(\ref{PR12}) can be trivially verified if we use
an orthonormal basis where $\boldsymbol{e}_{0}=\boldsymbol{Z}$, for in this
case $\boldsymbol{\alpha=\theta}^{0}$ and we realize that
\begin{gather}
\mathring{\omega}_{ij}=-\frac{1}{2}\left(  \boldsymbol{\mathring{\gamma}%
}_{\cdot ij}^{0\cdot\cdot}-\boldsymbol{\mathring{\gamma}}_{\cdot ji}%
^{0\cdot\cdot}\right)  =-\frac{1}{2}\boldsymbol{c}_{\cdot ij}^{0\cdot\cdot
},\nonumber\\
\mathbf{\mathring{\sigma}}_{ij}=-\frac{1}{2}\left(  \boldsymbol{\mathring
{\gamma}}_{\cdot ij}^{0\cdot\cdot}+\boldsymbol{\mathring{\gamma}}_{\cdot
ji}^{0\cdot\cdot}\right)  -\frac{1}{3}\mathfrak{\mathring{E}}h_{ij}%
,\nonumber\\
\mathfrak{\mathring{E}=-}\eta^{ij}\boldsymbol{\mathring{\gamma}}_{\cdot
ij}^{0\cdot\cdot}. \label{PR14a}%
\end{gather}

\end{proof}

\begin{remark}
We can show that the vorticity tensor has the same components as the object%
\begin{equation}
g(\star(\boldsymbol{\alpha}\wedge d\boldsymbol{\alpha)},\text{ }),
\label{PR15}%
\end{equation}
where $\star$ is the Hodge star operator. Indeed, we have
\begin{align*}
\star(\boldsymbol{\alpha}\wedge d\boldsymbol{\alpha)}  &  =\star\left(
\boldsymbol{\theta}^{0}\wedge d\boldsymbol{\theta}^{0}\right)  =-\frac{1}%
{2}\boldsymbol{c}_{\cdot ij}^{0\cdot\cdot}\star(\boldsymbol{\theta}^{0}%
\wedge\boldsymbol{\theta}^{i}\wedge\boldsymbol{\theta}^{j})\\
&  =-\boldsymbol{c}_{\cdot23}^{0\cdot\cdot}\boldsymbol{\theta}_{1}%
+\boldsymbol{c}_{\cdot13}^{0\cdot\cdot}\boldsymbol{\theta}_{2}-\boldsymbol{c}%
_{\cdot12}^{0\boldsymbol{\cdot\cdot}}\boldsymbol{\theta}_{3},
\end{align*}
and
\begin{equation}
g(\star(\boldsymbol{\alpha}\wedge d\boldsymbol{\alpha)},\text{ }%
)=\boldsymbol{c}_{\cdot23}^{0\cdot\cdot}\boldsymbol{e}_{1}+\boldsymbol{c}%
_{\cdot31}^{0\cdot\cdot}\boldsymbol{e}_{2}+\boldsymbol{c}_{\cdot12}%
^{0\cdot\cdot}\boldsymbol{e}_{3}=\frac{1}{2}\epsilon^{0ijk}\boldsymbol{c}%
_{\cdot jk}^{0\cdot\cdot}\boldsymbol{e}_{i}. \label{pr16}%
\end{equation}

\end{remark}

\begin{remark}
\emph{Eq.(\ref{PR14}) }is the basis for the classification of reference frames
in a Lorentzian spacetime structure\footnote{For the classification of
reference frames in a Newtonian spacetime structure, see \cite{rqb}.}
\emph{\cite{rr19,rodcap2007,sw}} and in order to be possible to talk about the
classification of reference frames in a Riemann-Cartan spacetime structure we
need the
\end{remark}

\begin{proposition}%
\begin{equation}
D\boldsymbol{\alpha}=\boldsymbol{a}\otimes\boldsymbol{\alpha}%
+\boldsymbol{\omega}+\boldsymbol{\sigma}+\frac{1}{3}\mathfrak{E}%
\boldsymbol{h,} \label{PR17}%
\end{equation}

\begin{subequations}
\begin{gather}
\boldsymbol{a}=D_{\boldsymbol{Z}}\boldsymbol{\alpha,}\label{pr18a}\\
\boldsymbol{\omega=\mathring{\omega}}+\boldsymbol{T}^{0},\text{ \ \ }%
\boldsymbol{\sigma=\mathring{\sigma}}+\frac{1}{3}%
(\boldsymbol{\mathfrak{E-\mathring{E}}})\boldsymbol{\boldsymbol{h}+S}%
^{0},\label{pr18b}\\
\boldsymbol{T}^{0}=\frac{1}{2}\boldsymbol{T}_{\cdot ij}^{0\cdot\cdot
}\boldsymbol{\theta}^{i}\wedge\boldsymbol{\theta}^{j},\text{ \ \ }%
\boldsymbol{S}^{0}=-\frac{1}{2}\boldsymbol{S}_{\cdot ij}^{0\cdot\cdot
}\boldsymbol{\theta}^{i}\otimes\boldsymbol{\theta}^{j}. \label{pr18c}%
\end{gather}

\end{subequations}
\end{proposition}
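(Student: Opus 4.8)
The plan is to reduce the entire statement to the purely algebraic relation between the two connections, Eqs.~(\ref{3a})--(\ref{3d}), and then to read off the irreducible pieces inside the three-dimensional subspace orthogonal to $\boldsymbol{Z}$. First I would invoke Proposition~\ref{existthetas} to pass to the orthonormal moving frame $\langle\boldsymbol{e}_{\mu}\rangle$ with $\boldsymbol{e}_{0}=\boldsymbol{Z}$, so that $\boldsymbol{\alpha}=\boldsymbol{\theta}^{0}$, the projector is $\boldsymbol{h}=\sum_{i}\boldsymbol{\theta}^{i}\otimes\boldsymbol{\theta}^{i}$, and $\boldsymbol{g}(\boldsymbol{e}_{\mu},\boldsymbol{e}_{\nu})=\eta_{\mu\nu}$. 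In this frame the components of $D\boldsymbol{\alpha}$ and of $\mathring{D}\boldsymbol{\alpha}$ are built, up to the index-ordering convention used in (\ref{PR14}), from $\boldsymbol{\gamma}_{\cdot\mu\nu}^{0\cdot\cdot}$ and $\boldsymbol{\mathring{\gamma}}_{\cdot\mu\nu}^{0\cdot\cdot}$ respectively, exactly as in the computation leading to (\ref{PR14a}). This is the natural route since the authors' own proof of (\ref{PR12}) already proceeds in this adapted orthonormal frame.

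The decisive observation is that, because the contorsion $K_{\cdot\mu\nu}^{\lambda\cdot\cdot}$ of (\ref{3b}) is a genuine tensor, the relation $\boldsymbol{\gamma}_{\cdot\mu\nu}^{\lambda\cdot\cdot}=\boldsymbol{\mathring{\gamma}}_{\cdot\mu\nu}^{\lambda\cdot\cdot}+K_{\cdot\mu\nu}^{\lambda\cdot\cdot}$ holds in the orthonormal frame just as in a coordinate basis. Since $\boldsymbol{\alpha}=\boldsymbol{\theta}^{0}$ selects only the $\lambda=0$ slot, I would obtain the clean tensorial identity $D\boldsymbol{\alpha}-\mathring{D}\boldsymbol{\alpha}=-K_{\cdot\mu\nu}^{0\cdot\cdot}\,\boldsymbol{\theta}^{\mu}\otimes\boldsymbol{\theta}^{\nu}$ (indices ordered as in (\ref{PR14})). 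Metric compatibility of both connections, i.e. the antisymmetry of $\eta_{\lambda\rho}\boldsymbol{\gamma}_{\cdot\mu\nu}^{\rho\cdot\cdot}$ under $\lambda\leftrightarrow\nu$, forces the components carrying a timelike slot either to vanish or to be reabsorbed into the acceleration $\boldsymbol{a}=D_{\boldsymbol{Z}}\boldsymbol{\alpha}$. Thus the $\boldsymbol{a}\otimes\boldsymbol{\alpha}$ term of (\ref{PR17}) carries the $\mu=0$ derivative direction exactly as $\boldsymbol{\mathring{a}}\otimes\boldsymbol{\alpha}$ does in (\ref{PR12}), and the remaining spatial block $(i,j)$ carries rotation, shear and expansion.

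Next I would split the spatial contorsion block into the antisymmetric and symmetric parts dictated by (\ref{3c})--(\ref{3d}): $K_{\cdot ij}^{0\cdot\cdot}=\tfrac12 T_{\cdot ij}^{0\cdot\cdot}+\tfrac12 S_{\cdot ij}^{0\cdot\cdot}$, with $T$ antisymmetric and $S$ symmetric in $ij$. The antisymmetric piece contributes only to the antisymmetric (rotation) part and, recalling the identification of $\boldsymbol{\mathring{\omega}}$ with $\boldsymbol{\mathring{\gamma}}_{\cdot ij}^{0\cdot\cdot}$ in (\ref{PR14a}), yields $\boldsymbol{\omega}=\boldsymbol{\mathring{\omega}}+\boldsymbol{T}^{0}$ with $\boldsymbol{T}^{0}=\tfrac12\boldsymbol{T}_{\cdot ij}^{0\cdot\cdot}\boldsymbol{\theta}^{i}\wedge\boldsymbol{\theta}^{j}$; the precise factor and sign follow from the wedge normalization and the antisymmetrization convention fixed in (\ref{PR14a}). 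The symmetric piece produces $\boldsymbol{S}^{0}=-\tfrac12\boldsymbol{S}_{\cdot ij}^{0\cdot\cdot}\boldsymbol{\theta}^{i}\otimes\boldsymbol{\theta}^{j}$, which feeds the shear-plus-expansion part.

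The main obstacle, and the only step beyond bookkeeping, is the treatment of the trace of $\boldsymbol{S}^{0}$, which is nonzero in general, whereas the shear $\boldsymbol{\sigma}$ must remain trace-free. I would compute $\mathfrak{E}=D_{\mu}Z^{\mu}$ directly and compare it with $\mathfrak{\mathring{E}}=\mathring{D}_{\mu}Z^{\mu}$; their difference is precisely the $\eta^{ij}$-trace of the symmetric contorsion block, that is, the trace of $\boldsymbol{S}^{0}$. Reassigning this trace from the symmetric tensor to the expansion term, so that $\tfrac13\mathfrak{E}\boldsymbol{h}$ replaces $\tfrac13\mathfrak{\mathring{E}}\boldsymbol{h}$, forces the compensating term $\tfrac13(\mathfrak{E}-\mathfrak{\mathring{E}})\boldsymbol{h}$ inside the shear and leaves $\boldsymbol{\sigma}=\boldsymbol{\mathring{\sigma}}+\tfrac13(\mathfrak{E}-\mathfrak{\mathring{E}})\boldsymbol{h}+\boldsymbol{S}^{0}$ genuinely trace-free, reproducing (\ref{pr18b}). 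Collecting the acceleration, antisymmetric, trace-free symmetric and trace parts then assembles (\ref{PR17}) together with (\ref{pr18a})--(\ref{pr18c}). The one place to watch is exactly this trace reshuffling, since the sign of the trace of $\boldsymbol{S}^{0}$ relative to $\mathfrak{E}-\mathfrak{\mathring{E}}$ is pinned down only once the index-order convention for $D\boldsymbol{\alpha}$ and the definition $\mathfrak{E}=D_{\mu}Z^{\mu}$ are held fixed, consistently with (\ref{PR14}) and (\ref{PR14a}).
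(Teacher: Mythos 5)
Your proposal is correct and follows essentially the same route the paper intends: the paper's entire proof is the single sentence that the result ``is a simple exercise using an orthonormal basis where $\boldsymbol{\alpha}=\boldsymbol{\theta}^{0}$,'' and you carry out exactly that exercise, using the tensoriality of the contorsion to write $D\boldsymbol{\alpha}-\mathring{D}\boldsymbol{\alpha}$ in terms of $K_{\cdot\mu\nu}^{0\cdot\cdot}$, splitting its spatial block into the antisymmetric torsion piece and the symmetric strain piece, and reassigning the trace to the expansion. The only loose end --- the sign and normalization conventions for $\boldsymbol{T}^{0}$ and $\boldsymbol{S}^{0}$ relative to the wedge and to $\mathfrak{E}-\mathfrak{\mathring{E}}$ --- is one the paper itself leaves unfixed, so your explicit flagging of that point is appropriate rather than a gap.
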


\begin{proof}
It is a simple exercise using an orthonormal basis where $\boldsymbol{\alpha
=\theta}^{0}$.
\end{proof}

\begin{remark}
We observe that in a Riemann-Cartan spacetime the interpretation of
$\boldsymbol{\omega}$ \emph{(}in the decomposition of $D\boldsymbol{\alpha}$
given by \emph{Eq.(\ref{PR17}))} is the same as $\boldsymbol{\mathring{\omega
}}$ in a Lorentzian spacetime \emph{\cite{rodcap2007}}, i.e., it measures the
rotation that one of the infinitesimally nearby curves to an integral curve
$\gamma$ \emph{(}an\ `observer'\emph{)} of $\boldsymbol{Z}$ had in an
infinitesimal lapse of propertime with relation to an orthonormal basis
Fermi-transported by the `observer' $\gamma$. The interpretation of the terms
$\boldsymbol{\sigma}$ and $\boldsymbol{\mathfrak{E}}$ are also analogous to
the corresponding terms in a Lorentzian spacetime.\ Thus, a reference frame is
non-rotating if $\boldsymbol{\omega}=0$, i.e., $\boldsymbol{\mathring{\omega
}=}-\boldsymbol{T}^{0}$ and \emph{Eq.(\ref{pr18c})} shows that torsion is
indeed related to rotation from the point of view of a Lorentzian spacetime structure.
\end{remark}

\subsection{Inertial Reference Frames in $\langle M\simeq\mathbb{R}%
^{4},\boldsymbol{\eta},\protect\overset{m}{D},\boldsymbol{\tau}%
_{\boldsymbol{\eta}},\uparrow\rangle$}

Now, let $\langle M,\boldsymbol{g},\mathring{D},\boldsymbol{\tau
}_{\boldsymbol{g}},\uparrow\rangle=\langle M\simeq\mathbb{R}^{4}%
,\boldsymbol{\eta},\overset{m}{D},\boldsymbol{\tau}_{\boldsymbol{\eta}%
},\uparrow\rangle$ and let $\langle\mathtt{x}^{\mu}\rangle$ be coordinates in
the Einstein-Lorentz-Poincar\'{e} gauge for $M$. If the matrix with entries
$\eta_{\mu\nu}$ is the diagonal matrix \textrm{diag}$(1,-1,-1,-1)$, we have%
\begin{equation}
\boldsymbol{\eta}=\eta_{\mu\nu}d\mathtt{x}^{\mu}\otimes d\mathtt{x}^{\nu}
\label{mg}%
\end{equation}
If we put\ $\boldsymbol{I}=\partial/\partial\mathtt{x}^{0}$ we see immediately
that that $\langle\mathtt{x}^{\mu}\rangle$ is a $\langle nacs|\boldsymbol{Z}%
\rangle$. We have trivially%
\begin{equation}
\overset{m}{D}\boldsymbol{\alpha}_{\boldsymbol{I}}=0, \label{irfa}%
\end{equation}
which means that for the\ reference frame $\boldsymbol{I}=\partial
\boldsymbol{/}\partial\mathtt{x}^{0}$ we have $\boldsymbol{a}=0$,
$\mathbf{\omega}=0$, $\mathbf{\sigma}=0$, $\mathfrak{E}=0.$

\begin{definition}
A inertial reference frame \emph{(IRF)} in $\langle M\simeq\mathbb{R}%
^{4},\boldsymbol{\eta},\overset{m}{D},\boldsymbol{\tau}_{\boldsymbol{\eta}%
},\uparrow\rangle$ is reference frame $\boldsymbol{I}$ such that
$\overset{m}{D}\boldsymbol{\alpha}_{\boldsymbol{I}}=0$.
\end{definition}

So, inertial reference frames in Special Relativity are not accelerating, not
rotating, have no shear and no deformation. Of course, since $\overset{m}{D}%
_{\boldsymbol{\partial/\partial}\mathtt{x}^{0}}\partial/\partial\mathtt{x}%
^{0}=0$, each one of the integral lines of the vector field $\boldsymbol{I}%
=\partial/\partial\mathtt{x}^{0}$ is a timelike autoparallel (in this case,
also a geodesic) of Minkowski spacetime (a straight line).

\subsection{Is there IRFs in\ Lorentzian and Riemann-Cartan Spacetimes?}

The answer is yes for the Lorentzian case only if we can find a reference
frame $\boldsymbol{I}$ such that $\mathring{D}\boldsymbol{\alpha
}_{\boldsymbol{I}}=0$. In general this equation has no solution in a general
$\langle M,\boldsymbol{g},\mathring{D},\boldsymbol{\tau}_{\boldsymbol{g}%
},\uparrow\rangle$ structure and indeed we have the

\begin{proposition}
\emph{\cite{sw} }An \textbf{IRF} exists in the Lorentzian structure $\langle
M,\boldsymbol{g},\mathring{D},\boldsymbol{\tau}_{\boldsymbol{g}}%
,\uparrow\rangle$ only if the Ricci tensor satisfies%
\begin{equation}
Ricci(\boldsymbol{I},\boldsymbol{Y})=0 \label{irfb}%
\end{equation}
for any $\boldsymbol{Y}\in\sec TM$.
\end{proposition}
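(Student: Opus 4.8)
The plan is to turn the defining condition $\mathring{D}\boldsymbol{\alpha}_{\boldsymbol{I}}=0$ into a statement about the curvature tensor and then contract it. First I would exploit that $\mathring{D}$ is the Levi-Civita connection of $\boldsymbol{g}$, hence metric compatible, $\mathring{D}\boldsymbol{g}=0$. Since $\boldsymbol{\alpha}_{\boldsymbol{I}}=\boldsymbol{g}(\boldsymbol{I},\ )$ and $\boldsymbol{g}$ is nondegenerate, a short Leibniz computation gives $(\mathring{D}_{\boldsymbol{X}}\boldsymbol{\alpha}_{\boldsymbol{I}})(\boldsymbol{Y})=\boldsymbol{g}(\mathring{D}_{\boldsymbol{X}}\boldsymbol{I},\boldsymbol{Y})$, so the hypothesis $\mathring{D}\boldsymbol{\alpha}_{\boldsymbol{I}}=0$ is equivalent to the vanishing of $\mathring{D}\boldsymbol{I}$ itself; that is, an \textbf{IRF} is precisely a future-pointing unit timelike vector field $\boldsymbol{I}$ that is parallel, $\mathring{D}_{\boldsymbol{X}}\boldsymbol{I}=0$ for every $\boldsymbol{X}\in\sec TM$. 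This reformulation is the key step.

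Next I would feed the parallel field $\boldsymbol{I}$ into the curvature operation $\boldsymbol{\rho}$ of $\mathring{D}$ defined in Eq.(\ref{cop}). For any $\boldsymbol{u},\boldsymbol{v}\in\sec TM$ every term of
\[
\boldsymbol{\rho}(\boldsymbol{u},\boldsymbol{v})\boldsymbol{I}=\mathring{D}_{\boldsymbol{u}}\mathring{D}_{\boldsymbol{v}}\boldsymbol{I}-\mathring{D}_{\boldsymbol{v}}\mathring{D}_{\boldsymbol{u}}\boldsymbol{I}-\mathring{D}_{[\boldsymbol{u},\boldsymbol{v}]}\boldsymbol{I}
\]
carries at least one covariant derivative of $\boldsymbol{I}$ and therefore vanishes. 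Taking $\boldsymbol{u}=\boldsymbol{e}_{\alpha}$, $\boldsymbol{v}=\boldsymbol{e}_{\beta}$ and pairing with $\boldsymbol{\theta}^{\lambda}$ as in Eq.(\ref{curv op}), this is the component identity $\mathring{R}^{\cdot\lambda\cdot\cdot}_{\mu\cdot\alpha\beta}\,I^{\mu}=0$, valid for all $\lambda,\alpha,\beta$.

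Finally I would contract the upper index $\lambda$ with the third index $\alpha$, which is exactly the Ricci contraction for the sign convention fixed by Eq.(\ref{curc}): setting $Ricci_{\mu\beta}:=\mathring{R}^{\cdot\lambda\cdot\cdot}_{\mu\cdot\lambda\beta}$ gives $Ricci_{\mu\beta}\,I^{\mu}=0$, i.e. $Ricci(\boldsymbol{I},\boldsymbol{e}_{\beta})=0$ for every $\beta$. As the $\boldsymbol{e}_{\beta}$ form a frame, this yields $Ricci(\boldsymbol{I},\boldsymbol{Y})=0$ for all $\boldsymbol{Y}\in\sec TM$, the asserted necessary condition.

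There is no deep obstacle: the argument is the classical integrability condition for a parallel vector field, so the entire content is the reformulation in the first paragraph together with careful index bookkeeping in the last. The only point demanding attention is the contraction, because $\mathring{R}^{\cdot\lambda\cdot\cdot}_{\mu\cdot\alpha\beta}$ is antisymmetric in $\alpha,\beta$, so contracting $\lambda$ against $\alpha$ or against $\beta$ produces the Ricci tensor up to an overall sign and either choice delivers the same vanishing statement. I would also stress that the conclusion is only necessary (an ``only if''): the full identity $\mathring{R}^{\cdot\lambda\cdot\cdot}_{\mu\cdot\alpha\beta}I^{\mu}=0$ is strictly stronger than its Ricci trace, so the converse is not claimed.
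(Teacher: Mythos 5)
Your proof is correct. The paper offers no proof of this proposition at all --- it simply cites Sachs and Wu \cite{sw} --- and your argument is the standard one that source supplies: metric compatibility converts $\mathring{D}\boldsymbol{\alpha}_{\boldsymbol{I}}=0$ into $\mathring{D}\boldsymbol{I}=0$, a parallel vector field is annihilated by the curvature operator of Eq.~(\ref{cop}), and contracting $\mathring{R}_{\mu\cdot\alpha\beta}^{\cdot\lambda\cdot\cdot}I^{\mu}=0$ over $\lambda$ and $\alpha$ yields $Ricci(\boldsymbol{I},\boldsymbol{Y})=0$, with your remark that the sign ambiguity of the Ricci convention is irrelevant to a vanishing statement being exactly the right caveat.
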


\begin{remark}
This excludes, e.g., Friedmann universe spacetimes, Einstein-de Sitter
spacetime. So, no IRF exist in many models of GRT\emph{ }considered to be of
interest by one reason or another by\ `professional relativists'.
\end{remark}

\begin{remark}
The situation in a Riemann-Cartan spacetime is more complicated and will be
analyzed elsewhere, but we observe that in an arbitrary teleparalell spacetime
structure $\langle M,\boldsymbol{g},\overset{e}{\nabla},\boldsymbol{\tau
}_{\boldsymbol{g}},\uparrow\rangle$ the teleparallel basis $\langle
\boldsymbol{e}_{\mu}\rangle$ satisfies $\nabla_{\boldsymbol{e}_{\nu}%
}\boldsymbol{e}_{\mu}=0$. Then the reference frame $\boldsymbol{e}_{0}$ is a
\textbf{IRF}.
\end{remark}

\subsection{Pseudo Inertial Reference Frames}

\begin{definition}
A reference frame $\mathfrak{I}\in\sec TU,$ $U\subset M$ is said to be a
\emph{pseudo inertial} \emph{reference frame (\textbf{PIRF}) i}f
$D_{\mathfrak{I}}\mathfrak{I}=0$, $\boldsymbol{\alpha}_{\mathfrak{I}}\wedge
d\boldsymbol{\alpha}_{\mathfrak{I}}=0$ and $\boldsymbol{\alpha}_{\mathfrak{I}%
}=g(\mathfrak{I},$ $)$.
\end{definition}

This definition means that a \textbf{PIRF} is in free fall and it is non
rotating. It means also that it is at least \emph{locally synchronizable}, but
we are not going to discuss synchronizability here (details may be found,
e.g., in (\cite{rodcap2007}).

\section{What is a \textbf{LIRF} in $\langle M,\boldsymbol{g},\nabla
,\boldsymbol{\tau}_{\boldsymbol{g}},\uparrow\rangle$}

\subsection{Normal Coordinate Functions at $p_{o}\in M$}

In what follows $\nabla$ denotes $\mathring{D}$ or $D$. We will specialize our
discourse at appropriate places.

Let $(\varphi,U)$ be a local chart around $p_{o}\in U$ with coordinate
functions $\langle\xi^{\mu}\rangle$. Let $\gamma:\mathbb{R\supset}I\rightarrow
M$, $\tau\mapsto\gamma(\tau)$ an \emph{autoparallel\footnote{Autoparallels in
a Lorentzian spacetime structure coincide with geodesics.}} in $M$ according
to an arbitrary connection $\nabla$, i.e.,%
\begin{equation}
\nabla_{\gamma_{\ast}}\gamma_{\ast}=0. \label{geodesic}%
\end{equation}
Take arbitrary points $p_{o},q\in M$. Put
\begin{gather}
\gamma(0)=p_{o}\text{, \ \ }\boldsymbol{\xi}_{q}:=\left.  \frac{d}{d\tau
}\right\vert _{p_{o}}=\xi_{q}^{\mu}\left.  \frac{\partial}{\partial\xi^{\mu}%
}\right\vert _{p_{o}}=\xi_{q}^{\mu}\mathbf{e}_{\mu}\in T_{p_{o}}M,\label{10}\\
\xi^{\mu}(p_{o})=\xi_{p_{o}}^{\mu}=0,\text{ \ \ \ }\xi^{\mu}(q)=\xi_{q}^{\mu
}\neq0
\end{gather}

Although the notation looks strange it will become clear in a while. Now, any
autoparallel emanating from $p_{o}$ is specified by a given $\boldsymbol{\xi
}_{q}\in T_{p_{o}}M$. Indeed, take $q$\ `near' $p_{o}$, this statement simply
meaning here that the coordinates difference \ $\triangle\xi^{\mu}=\xi
_{q}^{\mu}-\xi_{p_{o}}^{\mu}=\xi_{q}^{\mu}<<1$. In general there may be many
autoparallels that connect $p_{o}$ to $q$. However, there exists a unique
autoparallel $\gamma_{q}:\mathbb{R\supset}I\rightarrow M$, $\tau\mapsto
\gamma_{q}(\tau)$ such that
\begin{equation}
\gamma_{q}(0)=p_{o},\text{ \ \ }\gamma_{q}(1)=q. \label{11}%
\end{equation}

So, under the above conditions we see that if $\xi_{q}^{\mu}<<1$, \ then $q$
uniquely specifies a vector $\boldsymbol{\xi}_{q}=\xi_{q}^{\mu}\mathbf{e}%
_{\mu}\in T_{p_{o}}M.$ It is evident that $\varphi:q\mapsto\boldsymbol{\xi
}_{q}$ serves as a good coordinate system in a neighbourhood of $p_{o}$. We have

\begin{definition}
$\varphi:q\mapsto(\xi_{q}^{0},\xi_{q}^{1},\xi_{q}^{2},\xi_{q}^{3}):=\{\xi
_{q}^{\mu}\}$ is called \ a normal coordinate chart based on $p_{o}$
\emph{[(nccb}$|p_{o}$\emph{)]} with basis $\mathbf{e}_{\mu}=\left.
\frac{\partial}{\partial\xi^{\mu}}\right\vert _{p_{o}}$.
\end{definition}

Obviously $\varphi(p_{o})=(0,0,0,0).$

\begin{definition}
The so-called exponential map is the mapping%
\begin{gather}
\exp:T_{p_{o}}M\rightarrow M,\text{ \ \ }\exp\boldsymbol{\xi}_{q}%
=q,\nonumber\\
\varphi(\exp\boldsymbol{\xi}_{q})=\{\xi_{q}^{\mu}\}. \label{13}%
\end{gather}

\end{definition}

With respect to the (nccb$|p_{o}$) an autoparallel $\gamma(\tau)$ with
$\gamma(0)=p_{o}$ and $\gamma(1)=q$ is represented by%
\begin{equation}
\varphi(\gamma(\tau))=\{\xi_{q}^{\mu}\tau\}. \label{14}%
\end{equation}

\subsection{Autoparallels Passing Trough $p_{o}$ in $\langle M,\boldsymbol{g}%
,\mathring{D},\boldsymbol{\tau}_{\boldsymbol{g}},\uparrow\rangle$}

If $\gamma(\tau)$ is an autoparallel in the structure $\langle
M,\boldsymbol{g},\mathring{D},\boldsymbol{\tau}_{\boldsymbol{g}}%
,\uparrow\rangle$ it satisfies the equation $\mathring{D}_{\gamma_{\ast}%
}\gamma_{\ast}=0$. Let
\begin{equation}
\mathring{D}_{\partial/\partial\xi^{\mu}}\partial/\partial\xi^{\nu}%
:=\mathring{\Gamma}_{_{\cdot\mu\nu}}^{\alpha\cdot\cdot}\partial/\partial
\xi^{\alpha}. \label{15}%
\end{equation}
We shall prove that the connection coefficients $\mathring{\Gamma}_{_{\cdot
\mu\nu}}^{\alpha\cdot\cdot}$ vanishes at $p_{o}$. Indeed, the coordinate
expression of the autoparallel equation $\mathring{D}_{\gamma_{\ast}}%
\gamma_{\ast}=0$ is%
\begin{gather}
\mathring{D}_{\gamma_{\ast}}\gamma_{\ast}=\mathring{D}_{\frac{d}{d\tau}}%
[\frac{d\xi^{\nu}}{d\tau}\partial/\partial\xi^{\nu}]=\frac{d\xi^{\mu}}{d\tau
}\mathring{D}_{\partial/\partial\xi^{\mu}}\frac{d\xi^{\nu}}{d\tau}%
(\tau)\partial/\partial\xi^{\nu}\nonumber\\
=\frac{d^{2}\xi^{\nu}}{d\tau^{2}}\partial/\partial\xi^{\nu}+\frac{d\xi^{\nu}%
}{d\tau}\frac{d\xi^{\mu}}{d\tau}\mathring{D}_{\partial/\partial\xi^{\mu}%
}\partial/\partial\xi^{\nu}\nonumber\\
=\frac{d^{2}\xi^{\nu}}{d\tau^{2}}\partial/\partial\xi^{\nu}+\frac{d\xi^{\nu}%
}{d\tau}\frac{d\xi^{\mu}}{d\tau}[\mathring{\Gamma}_{_{\cdot\mu\nu}}%
^{\alpha\cdot\cdot}(\varphi(\gamma(\tau))]\partial/\partial\xi^{\alpha
}\nonumber\\
=\left\{  \frac{d^{2}\xi^{\nu}}{d\tau^{2}}+\frac{d\xi^{\alpha}}{d\tau}%
\frac{d\xi^{\mu}}{d\tau}[\mathring{\Gamma}_{_{\cdot\mu\alpha}}^{\nu\cdot\cdot
}(\varphi(\gamma(\tau))]\right\}  \partial/\partial\xi^{\nu}=0. \label{16}%
\end{gather}
Now, since at any point $q^{\prime}$ near $p_{o}$ it is%
\begin{equation}
\xi^{\nu}(\tau)=\xi_{q^{\prime}}^{\nu}\tau\label{geq}%
\end{equation}
we have $\left.  \frac{d^{2}\xi^{\nu}}{d\tau^{2}}\right\vert _{\tau=0}=0$ and
Eq.(\ref{16}) gives immediately in view of the fact that $\mathring{\Gamma
}_{_{\cdot\mu\nu}}^{\alpha\cdot\cdot}=\mathring{\Gamma}_{_{\cdot\nu\mu}%
}^{\alpha\cdot\cdot}$ that
\begin{equation}
\mathring{\Gamma}_{_{\cdot\mu\alpha}}^{\nu\cdot\cdot}(\varphi(\gamma(0))=0.
\label{16b}%
\end{equation}

\begin{remark}
In what follows for simplicity of notation and when no confusion arises\ we
eventually use the sloop notation $\mathring{\Gamma}_{_{\cdot\mu\alpha}}%
^{\nu\cdot\cdot}(p_{o}):=\mathring{\Gamma}_{_{\cdot\mu\alpha}}^{\nu\cdot\cdot
}(\varphi(\gamma(0)).$
\end{remark}

Since it is well known that
\begin{equation}
\mathring{\Gamma}_{_{\cdot\mu\alpha}}^{\nu\cdot\cdot}=\frac{1}{2}g^{\nu\kappa
}\left(  \frac{\partial g_{\kappa\alpha}}{\partial\xi^{\mu}}+\frac{\partial
g_{\kappa\mu}}{\partial\xi^{\alpha}}-\frac{\partial g_{\mu\alpha}}{\partial
\xi^{\kappa}}\right)  ,\label{18}%
\end{equation}
we arrive at the conclusion that at the $p_{o}=\gamma(0)$ we can choose the
normal coordinate functions such that
\begin{equation}
\left.  \boldsymbol{g}\left(  \partial/\partial\xi^{\mu},\partial/\partial
\xi^{\nu}\right)  \right\vert _{p_{o}}=\eta_{\mu\nu},\text{ \ \ }\left.
\frac{\partial g_{\kappa\alpha}}{\partial\xi^{\mu}}\right\vert _{p_{o}%
}=0.\label{19}%
\end{equation}

We can also show through a simple computation that for any $q^{\prime}\in U$,
$q^{\prime}\notin p_{o}$ we have%
\begin{equation}
\left.  \mathring{\Gamma}_{_{\cdot\beta\gamma,\mu}}^{\alpha\cdot\cdot}%
(\xi^{\mu})\right\vert _{q^{\prime}}=\left.  \partial/\partial\xi^{\mu
}\mathring{\Gamma}_{_{\cdot\beta\gamma}}^{\alpha\cdot\cdot}(\xi^{\mu
})\right\vert _{q^{\prime}}=-\frac{1}{3}\left.  (\mathring{R}_{\beta
\cdot\gamma\mu}^{\cdot\alpha\cdot\cdot}(\xi^{\mu})+\mathring{R}_{\gamma
\cdot\beta\mu}^{\cdot\alpha\cdot\cdot}(\xi^{\mu}))\right\vert _{q^{\prime}%
},\label{20}%
\end{equation}
and also that for the chart $(\psi,V)$ with coordinate functions $\langle
x^{\mu}\rangle$ we have for $q^{\prime}$ near $p_{o}$
\begin{align}
\xi^{\mu} &  =x^{\mu}+\frac{1}{2}\mathbf{\mathring{\Gamma}}_{\cdot\alpha\beta
}^{\mu\cdot\cdot}(p_{o})x^{\alpha}x^{\beta},\nonumber\\
x^{\mu} &  =\xi^{\mu}-\frac{1}{2}\boldsymbol{\mathring{\Gamma}}_{\cdot
\alpha\beta}^{\mu\cdot\cdot}(p_{o})\xi^{\alpha}\xi^{\beta},\label{21}%
\end{align}
where $\mathbf{\mathring{\Gamma}}_{\cdot\alpha\beta}^{\mu\cdot\cdot}(p_{o})$
denotes the values of the connection coefficients in the coordinates $\langle
x^{\mu}\rangle$, i.e.,
\begin{equation}
\mathring{D}_{\partial/\partial x^{\mu}}\partial/\partial x^{\nu
}=\mathbf{\mathring{\Gamma}}_{\cdot\mu\nu}^{\alpha\cdot\cdot}\partial/\partial
x^{\alpha}.\label{22}%
\end{equation}

\begin{remark}
Let $\gamma$ $\in U\subset M$ be the world line of an observer in autoparallel
motion in spacetime, i.e., $\mathring{D}_{\gamma_{\ast}}\gamma_{\ast}=0$. Then
the above developments show that we can introduce in $U$ normal coordinate
functions $\langle\xi^{\mu}\rangle$ such that for every $p\in\gamma$ we have
\end{remark}

\begin{gather}
\left.  \frac{\partial}{\partial\xi^{0}}\right\vert _{p\in\gamma}=\gamma
_{\ast|p},\quad\left.  \boldsymbol{g}(\partial/\partial\xi^{\mu}%
,\partial/\partial\xi^{\nu})\right\vert _{p\in\gamma}=\eta_{\mu\nu},\text{
\ \ }\left.  \frac{\partial g_{\kappa\alpha}}{\partial\xi^{\mu}}\right\vert
_{p}=0\nonumber\\
\left.  \mathring{\Gamma}_{\cdot\nu\rho}^{\mu\cdot\cdot}(\xi^{\mu})\right\vert
_{p\in\gamma}=\left.  g^{\mu\alpha}\boldsymbol{g}(\partial/\partial\xi
^{\alpha},\mathring{D}_{\partial/\partial\xi^{\nu}}\partial/\partial\xi^{\rho
})\right\vert _{p\in\gamma}=0.\label{7B}%
\end{gather}

Finally observe that
\begin{equation}
\left.  \mathring{R}_{\mu\cdot\alpha\beta}^{\cdot\text{ }\lambda\cdot\text{
}\cdot\text{ }}\right\vert _{p\in\gamma}=\left.  \frac{\partial\mathring
{\Gamma}_{\cdot\beta\mu}^{\lambda\cdot\cdot}}{\partial\xi^{\alpha}}\right\vert
_{p\in\gamma}-\left.  \frac{\partial\mathring{\Gamma}_{\cdot\alpha\mu
}^{\lambda\cdot\cdot}}{\partial\xi^{\beta}}\right\vert _{p\in\gamma}\label{7C}%
\end{equation}
which is non null if the curvature tensor is non null in $U$.

\subsection{\textbf{LIRF}s in $\langle M,\boldsymbol{g},\mathring
{D},\boldsymbol{\tau}_{\boldsymbol{g}},\uparrow\rangle$}

\begin{definition}
\label{geo} Given a timelike autoparallel line $\gamma\subset U$ $\subset M$
and coordinates\emph{ }$\langle\xi^{\mu}\rangle$ covering $U$\emph{\ }we say
that a reference frame $\boldsymbol{L}=\partial/\partial\xi^{0}\in\sec TU$ is
a \emph{local\ inertial Lorentz reference frame associated to }$\gamma$
\emph{(}\textbf{LIRF}$\gamma$\emph{)\footnote{When no confusion arises and
$\gamma$ is clear from the context we simply write \textbf{LIRF}.}} iff
\begin{gather}
\left.  \boldsymbol{L}\right\vert _{p\in\gamma}=\left.  \frac{\partial
}{\partial\xi^{0}}\right\vert _{p\in\gamma}=\left.  \gamma_{\ast}\right\vert
_{p},\nonumber\\
\left.  \alpha_{\boldsymbol{L}}\wedge d\alpha_{\boldsymbol{L}}\right\vert
_{p\in\gamma}=0,\nonumber\\
\left.  \boldsymbol{g}(\partial/\partial\xi^{\mu},\partial/\partial\xi^{\nu
})\right\vert _{p\in\gamma}=\eta_{\mu\nu},\text{ \ \ }\left.  \frac{\partial
g_{\alpha\beta}}{\partial\xi^{\mu}}\right\vert _{p\in\gamma}=0.\label{7c}%
\end{gather}

\end{definition}

Moreover, we say also that the normal coordinate functions (also called in
Physics textbooks\ local Lorentz coordinate functions) $\langle\xi^{\mu
}\rangle$ are \emph{associated} with the \textbf{LIRF}$\gamma.$

\begin{remark}
It is very important to have in mind that for a \textbf{LIRF}$\gamma$
$\boldsymbol{L}$, in general $\left.  \mathring{D}_{\boldsymbol{L}%
}\boldsymbol{L}\right\vert _{p\notin\gamma}\neq0$ (i.e., only the integral
line $\gamma$ of \textbf{L} is in free fall in general),\textbf{\ }and also
eventually $\left.  \alpha_{\mathbf{L}}\wedge d\alpha_{\mathbf{L}}\right\vert
_{p\notin\gamma}\neq0$, which may be a surprising result. In contrast, a
\emph{PIRF} $\mathfrak{I}$ such that $\left.  \mathfrak{I}\right\vert
_{\gamma}=\left.  \boldsymbol{L}\right\vert _{\gamma}$ has all its integral
lines in free fall and the rotation of the frame is always null in all points
where the frame is defined. Finally its is worth to recall that both
$\mathfrak{I}$ and $\boldsymbol{L}$ may eventually have shear and expansion
even at the points of the autoparallel line $\gamma$ that they have in common.
More details in \emph{\cite{rodcap2007}.}
\end{remark}

Let $\gamma$ be an autoparallel line as in definition \emph{\ref{geo}}. A
section $s$ of the orthogonal frame bundle $FU,U\subset M$ is called an
\emph{inertial moving frame along }$\gamma$ \emph{(}\textbf{IMF}$\gamma
$\emph{)}\ when the set
\begin{equation}
s_{\gamma}=\{(e_{0}(p),e_{1}(p),e_{2}(p),e_{3}(p)),p\in\gamma\cap U\}\subset
s,\label{imf}%
\end{equation}
it such that $\forall p\in\gamma$
\begin{equation}
e_{0}(p)=\left.  \gamma_{\ast}\right\vert _{p\in\gamma},\text{ \ \ }\left.
\mathbf{g}(e_{\mu},e_{\nu})\right\vert _{p\in\gamma}=\eta_{\mu\nu},\text{
\ }\left.  \frac{\partial g_{\alpha\beta}}{\partial\xi^{\mu}}\right\vert
_{p\in\gamma}=0.\label{normal}%
\end{equation}
wich implies
\begin{equation}
\mathring{\Gamma}_{\cdot\nu\rho}^{\mu\cdot\cdot}(p)=g^{\mu\alpha}%
\mathbf{g}(e_{\alpha}(p),\mathring{D}_{e_{\nu}(p)}e_{\rho}(p))=0,\forall
p\in\gamma\text{.}%
\end{equation}

\begin{remark}
\label{EXFU}The existence of $s\in\sec FU$ satisfying the above conditions can
be easily proved. Introduce coordinate functions $<\xi^{\mu}>$ for\emph{\ }$U$
such that at $p_{o}\in\gamma,e_{0}(p_{o})=\left.  \frac{\partial}{\partial
\xi^{0}}\right\vert _{p_{o}}=\gamma_{\ast|p_{o}}$, and $e_{i}(p_{o})=\left.
\frac{\partial}{\partial\xi^{i}}\right\vert _{p_{o}},i=1,2,3$ \emph{(}three
orthonormal vectors\emph{)} satisfying \emph{Eq.(\ref{normal})} and parallel
transport the set $e{_{\mu}(}p_{o}{)}$ along $\gamma$. The set ${e_{\mu}%
(}p_{o}{)}$ will then also be \emph{Fermi} transported since $\gamma$ is a
geodesic and as such they define the standard of \emph{no} rotation along
$\gamma$. See details in \emph{\cite{rodcap2007}}.
\end{remark}

\begin{remark}
Let $\mathfrak{I}\in\sec TV$ be a \textbf{PIRF} and $\gamma$ $\subset U\cap V$
one of its integral lines and let $<\xi^{\mu}>,$ $U\subset M$ be a
\emph{normal coordinate system} through all the points of the world line
$\gamma$ such that $\gamma_{\ast}=\left.  \mathfrak{I}\right\vert _{\gamma}$.
Then, in general $<\xi^{\mu}>$ is not a $(nacs|\mathfrak{I})$ in $U$, i.e.,
$\left.  \mathfrak{I}\right\vert _{p\notin\gamma}\neq\left.  \partial
/\partial\xi^{0}\right\vert _{p\notin\gamma}$ even if $\left.  \mathfrak{I}%
\right\vert _{p\in\gamma}=\left.  \partial/\partial\xi^{0}\right\vert
_{p\in\gamma}$.
\end{remark}

\begin{remark}
It is very much important to recall that a reference frame field as introduced
above is a \emph{mathematical} instrument. It did not necessarily need to have
a material substratum \emph{(}i.e., to be realized as a material physical
system\emph{) }in the points of the spacetime manifold where it is defined.
More properly, we state that the integral lines of the vector field
representing a given reference frame do not need to correspond to worldlines
of real particles. If this crucial aspect is not taken into account we may
incur in serious misunderstandings.
\end{remark}

\begin{remark}
Physics textbooks and even most of the professional articles in GR do not
distinguish between the very different concepts of reference frames,
coordinate systems, sections of the frame bundle and does not leave clear what
is meant by the word \emph{local}. In general what authors mean by a local
inertial reference system is the concept of normal coordinates associated to a
timelike autoparallel curve $\gamma$ as describe above. Moreover, keep in mind
that of course, $\gamma_{\ast}=\frac{d}{d\tau}=\left.  \frac{\partial
}{\partial\xi^{0}}\right\vert _{\gamma}$.
\end{remark}

\subsection{\textbf{LIRF}s in $\langle M,\boldsymbol{g},D,\boldsymbol{\tau
}_{\boldsymbol{g}},\uparrow\rangle$}

We have seen above that we can always introduce around a point $p_{o}\in
U\subset M$ in a Lorentzian $\langle M,\boldsymbol{g},\mathring{D}%
,\boldsymbol{\tau}_{\boldsymbol{g}},\uparrow\rangle$ or in a Riemann-Cartan
$\langle M,\boldsymbol{g},D,\boldsymbol{\tau}_{\boldsymbol{g}},\uparrow
\rangle$ structure a chart $(\varphi,U)$ with normal coordinate functions.

However, it is not licit a priory to assume that the normal coordinate
functions of the two structures coincide. So, we denote by $\langle\zeta^{\mu
}\rangle$ the Riemann-Cartan normal coordinate functions around $p_{o}$ in
what follows. In the case of a Lorentzian structure we found that at \ $p_{o}$
the connection coefficients
\[
\mathring{\Gamma}_{\cdot\mu\nu}^{\alpha\cdot\cdot}(p_{o})=(\mathring
{D}_{\partial/\partial\xi^{\mu}}\partial/\partial\xi^{\nu})\cdot
(g^{\alpha\kappa}\partial/\partial\xi^{\kappa})=0.
\]
However,we are not going to suppose that this is generally the case in a
Riemann-Cartan structure. So, let us investigate which conditions%

\begin{equation}
\boldsymbol{\Gamma}_{\cdot\mu\nu}^{\alpha\cdot\cdot}(p_{o})=(D_{\partial
/\partial\zeta^{\mu}}\partial/\partial\zeta^{\nu})\cdot(g^{\alpha\kappa
}\partial/\partial\zeta^{\kappa}), \label{rccc}%
\end{equation}
must satisfy in normal coordinates $\langle\zeta^{\mu}\rangle$. A
Riemann-Cartan autoparallel $\gamma$ passing through $p_{o}$ and neighboring
points $q^{\prime}$ (in the sense mentioned above), satisfy $D_{\gamma_{\ast}%
}\gamma_{\ast}=0$, and we have
\begin{equation}
\frac{d^{2}\zeta^{\nu}}{d\tau^{2}}+\frac{d\zeta^{\alpha}}{d\tau}\frac
{d\zeta^{\mu}}{d\tau}[\mathbf{\Gamma}_{_{\cdot\mu\alpha}}^{\nu\cdot\cdot
}(\varphi(\gamma(\tau))]=0. \label{geod1}%
\end{equation}

If the autoparallel equation is for points from $p_{o}$ to $q$ given by
$\zeta^{\nu}(\tau)=\zeta_{q^{\prime}}^{\nu}\tau$ (recall Eq.(\ref{geq})) then
since $\left.  \frac{d^{2}\zeta^{\nu}}{d\tau^{2}}\right\vert _{\tau=0}=0$, at
$p_{o}$ we must have $\mathbf{\Gamma}_{_{\cdot(\mu\alpha)}}^{\nu\cdot\cdot
}\mathbf{(}p_{o}\mathbf{)}=\frac{1}{2}\left(  \mathbf{\Gamma}_{_{\cdot
\mu\alpha}}^{\nu\cdot\cdot}(p_{o})+\mathbf{\Gamma}_{_{\cdot\alpha\mu}}%
^{\nu\cdot\cdot}(p_{o})\right)  =0$, i.e.,%

\begin{equation}
\mathbf{\Gamma}_{_{\cdot\mu\alpha}}^{\nu\cdot\cdot}(p_{o})=-\mathbf{\Gamma
}_{_{\cdot\alpha\mu}}^{\nu\cdot\cdot}(p_{o}). \label{cond1}%
\end{equation}

Now, if we recall Eq.(\ref{3a}), Eq.(\ref{3c}), Eq.(\ref{3d}) which gives the
components of the torsion and strain tensors, we see that in the case of
normal coordinates $\langle\zeta^{\mu}\rangle$ we must have
\begin{subequations}
\label{crc}%
\begin{align}
\mathbf{T}_{_{\cdot\mu\alpha}}^{\nu\cdot\cdot}(p_{o})  &  =2\mathbf{\Gamma
}_{_{\cdot\mu\alpha}}^{\nu\cdot\cdot}(p_{o}),\label{a}\\
\mathbf{S}_{_{\cdot\mu\alpha}}^{\nu\cdot\cdot}(p_{o})  &  =-2\mathbf{\mathring
{\Gamma}}_{_{\cdot\mu\alpha}}^{\nu\cdot\cdot}(p_{o}), \label{b}%
\end{align}
which are the conditions that select the normal coordinate functions
$\langle\zeta^{\mu}\rangle$ near $p_{o}$ in a Riemann-Cartan spacetime.
\end{subequations}
\begin{remark}
We did not suppose, of course, that the autoparallels of the Levi-Civita and
Riemann-Cartan connections coincide \emph{(}since this is trivially
false\emph{)\footnote{E.g., the geodesics of the Levi-Civita and the
teleparallel connection on the punctured sphere $\mathring{S}$ are very
different, the latter one are the so-called \ loxodromic spirals and the
former are the maximum circles \cite{rodcap2007}.}}. So, we have the question:
when does the two kinds of autoparallels coincide? If they do coincide then
the Lorentzian and Riemann-Cartan normal coordinate functions around $p_{o}$
must coincide and since for a autoparallel from $p_{o}$ to $q$, it is $\left.
\frac{d^{2}\xi^{\nu}}{d\tau^{2}}\right\vert _{\tau=0}=0$ we must have again
that $\mathbf{\Gamma}_{_{\cdot\mu\alpha}}^{\nu\cdot\cdot}(p_{o}%
)=-\mathbf{\Gamma}_{_{\cdot\alpha\mu}}^{\nu\cdot\cdot}(p_{o})$. But now since
$\mathbf{\mathring{\Gamma}}_{_{\cdot\alpha\beta}}^{\nu\cdot\cdot}(p_{o})=0$ we
arrive at the conclusion that
\begin{subequations}
\label{crca}%
\begin{align}
\mathbf{T}_{_{\cdot\mu\alpha}}^{\nu\cdot\cdot} &  =2\mathbf{\Gamma}%
_{_{\cdot\mu\alpha}}^{\nu\cdot\cdot}(p_{o}),\label{c}\\
\mathbf{S}_{_{\cdot\mu\alpha}}^{\nu\cdot\cdot}(p_{o}) &  =0.\label{d}%
\end{align}
\emph{Eq.(\ref{d})} implies moreover that $\mathbf{T}_{_{\nu\mu\alpha}}%
^{\cdot\cdot\cdot}(p_{o})=-\mathbf{T}_{_{\mu\nu\alpha}}^{\cdot\cdot\cdot
}(p_{o})=-\mathbf{T}_{_{\alpha\mu\nu}}^{\cdot\cdot\cdot}(p_{o})$, i.e., the
torsion tensor must be completely anti-symmetric at all manifold points (since
$p_{o}$ is arbitrary):%
\end{subequations}
\begin{equation}
\mathbf{T}_{_{\mu\alpha\nu}}^{\cdot\cdot\cdot}(p_{o})=\mathbf{T}_{_{[\mu
\alpha\nu]}}^{\cdot\cdot\cdot}(p_{o})\label{u}%
\end{equation}
\emph{Eq.(\ref{u})} is then the condition for the two kinds of autoparallel to
coincide. It is a very particular condition and contrary to what is stated in
\emph{\cite{fabbri1,fabbri2,lang}} it is not satisfied by a general
Riemann-Cartan connection and thus cannot serve the purpose of fixing
coordinate functions that could model \textbf{LIRF} analogous to the ones that
exist in the Lorentzian case\emph{\footnote{Also, \cite{soc} who cites
\cite{fabbri1,fabbri2} did not realize that total antisymmetry of the
components of the torsion tensor is no more than the conditon for two kinds of
autoparallels (the Lorentzians and the Riemann-Cartan ones) to coincide.}}. We
recall moreover that the connection coefficients of the Riemann-Cartan
connection although anti-symmetric using the normal coordinate functions will
be not symmetric if arbitrary coordinate functions $\langle x^{\mu}\rangle$
are use, since we have%
\[
\Gamma_{\cdot\iota\kappa}^{\lambda\cdot\cdot}=\frac{\partial x^{\lambda}%
}{\partial\xi^{\mu}}\frac{\partial\xi^{\rho}}{\partial x^{\iota}}%
\frac{\partial\xi^{\sigma}}{\partial x^{\kappa}}\mathbf{\Gamma}_{\cdot
\rho\sigma}^{\mu\cdot\cdot}+\frac{\partial x^{\lambda}}{\partial\xi^{\mu}%
}\frac{\partial^{2}\xi^{\mu}}{\partial x^{\iota}\partial x^{\kappa}}%
\]
The symmetric part is, of course, the same one that appears also in the
transformation law for the Levi-Civita connection coefficients. We arrive at
the conclusion that only for very particular spacetimes, the ones in which the
strain tensor is null, we can build around a point $p_{o}$ normal coordinate
functions for which \emph{Eqs.(\ref{c})} and \emph{(\ref{d}) }hold and it is
clear that in this case $\left.  \boldsymbol{g}\left(  \partial/\partial
\xi^{\mu},\partial/\partial\xi^{\nu}\right)  \right\vert _{p_{o}}=\eta_{\mu
\nu}$ and $\left.  \frac{\partial g_{\kappa\alpha}}{\partial\xi^{\mu}%
}\right\vert _{p_{o}}=0$. However, for the case of \emph{Eqs.(\ref{a})} and
\emph{(\ref{b})} we cannot have $\left.  \boldsymbol{g}\left(  \partial
/\partial\zeta^{\mu},\partial/\partial\zeta^{\nu}\right)  \right\vert _{p_{o}%
}=\eta_{\mu\nu}$ and $\left.  \frac{\partial g_{\kappa\alpha}}{\partial
\xi^{\mu}}\right\vert _{p_{o}}=0$, for otherwise $\mathring{\Gamma}%
_{_{\cdot\mu\alpha}}^{\nu\cdot\cdot}(p_{o})$ would be null.
\end{remark}

So, in definitive normal coordinate functions are not useful to model a
\textbf{LIRF} in Riemann-Cartan spacetimes. So, what can we do to model such a
\textbf{LIRF} in this case?

To answer that question we need the following result:

\begin{proposition}
\label{rc}Along any timelike autoparallel line $\gamma$ $\subset M$\ in a
Riemann-Cartan spacetime structure\ there exists a section $s$ of the
orthogonal subframe bundle $FU\subset FM,U\subset M$ called an \emph{inertial
moving frame along }$\gamma$ \emph{(}\textbf{IMF}$\gamma$\emph{)}\
\begin{equation}
s_{\gamma}=\{(\boldsymbol{e}_{0}(p),\boldsymbol{e}_{1}(p),\boldsymbol{e}%
_{2}(p),\boldsymbol{e}_{3}(p)),p\in U\subset M\}\subset s,
\end{equation}
such that $\forall p\in\gamma$
\begin{gather}
\left.  \boldsymbol{e}_{0}\right\vert _{p\in\gamma}=\frac{1}{\sqrt{g_{00}}%
}\left.  \gamma_{\ast}\right\vert _{p\in\gamma},\text{ \ \ }\left.
\boldsymbol{g}(\boldsymbol{e}_{\mu},\boldsymbol{e}_{\nu})\right\vert
_{p\in\gamma}=\eta_{\mu\nu},\\
\left.  \lbrack\boldsymbol{e}_{\mu},\boldsymbol{e}_{\nu}]\right\vert
_{p\in\gamma}=\left.  \boldsymbol{c}_{\cdot\mu\nu}^{\beta\cdot\cdot
}\boldsymbol{e}_{\beta}\right\vert _{p\in\gamma},\\
\left.  \boldsymbol{\Gamma}_{\cdot\nu\rho}^{\mu\cdot\cdot}\right\vert
_{p\in\gamma}=0
\end{gather}
and where the $\boldsymbol{c}_{\cdot\nu\mu}^{\alpha\cdot\cdot}$ are not all
null \emph{(}i.e., the $\langle\boldsymbol{e}_{\mu}\rangle$ is not a
coordinate basis\emph{)}.
\end{proposition}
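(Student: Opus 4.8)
The plan is to build the frame in two stages: first along $\gamma$ by $D$-parallel transport, and then off $\gamma$ by an infinitesimal local Lorentz rotation that absorbs the surviving transverse connection coefficients. Fix $p_o\in\gamma$ and an orthonormal tetrad $\{\boldsymbol{e}_\mu(p_o)\}$ with $\boldsymbol{e}_0(p_o)=\gamma_*|_{p_o}/\sqrt{g_{00}}$. Because $\gamma$ is a $D$-autoparallel and $D$ is metric compatible, $\boldsymbol{g}(\gamma_*,\gamma_*)=g_{00}$ is constant along $\gamma$, so $\boldsymbol{e}_0:=\gamma_*/\sqrt{g_{00}}$ is itself $D$-parallel along $\gamma$. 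Transporting the whole tetrad by $D$ along $\gamma$ then keeps it orthonormal (metric compatibility) and keeps $\boldsymbol{e}_0\propto\gamma_*$ (autoparallelism), which secures the first two displayed conditions.

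This first stage, however, only yields $D_{\gamma_*}\boldsymbol{e}_\nu=0$ on $\gamma$, i.e. in the notation of Eq.(\ref{Dsb}) it forces $\boldsymbol{\gamma}_{\cdot 0\nu}^{\alpha\cdot\cdot}|_\gamma=0$ but says nothing about the transverse coefficients $\boldsymbol{\gamma}_{\cdot i\nu}^{\alpha\cdot\cdot}|_\gamma$, $i=1,2,3$. To kill these I would exploit the freedom of a frame-dependent Lorentz rotation. Since $D$ is metric compatible and the tetrad is orthonormal, the connection $1$-forms $\omega^{\alpha}{}_{\beta}=\boldsymbol{\gamma}_{\cdot\mu\beta}^{\alpha\cdot\cdot}\boldsymbol{\theta}^\mu$ are valued in the Lorentz algebra $\mathfrak{so}(1,3)$, i.e. $\omega_{\alpha\beta}=-\omega_{\beta\alpha}$. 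Introduce coordinates $(s,y^1,y^2,y^3)$ adapted to $\gamma=\{y^i=0\}$ and set $\boldsymbol{e}'_\mu=\Lambda^{\nu}{}_{\mu}\boldsymbol{e}_\nu$, where $\Lambda$ is the $SO(1,3)$-valued map fixed, to first order off $\gamma$, by $\Lambda|_\gamma=\mathrm{Id}$ and $d\Lambda|_\gamma=-\omega|_\gamma$; concretely one may take $\Lambda(s,y)=\exp\!\big(-\omega(\boldsymbol{e}_i)|_{\gamma(s)}\,y^{i}\big)$. Under this rotation the connection form transforms as $\omega'=\Lambda^{-1}\omega\Lambda+\Lambda^{-1}d\Lambda$, and on $\gamma$ this reduces to $\omega'|_\gamma=\omega|_\gamma+d\Lambda|_\gamma$.

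The consistency of this construction is exactly what the first stage guarantees: since $\Lambda|_\gamma=\mathrm{Id}$ its tangential derivative $\partial_s\Lambda|_\gamma$ vanishes, which matches the tangential component $-\omega(\boldsymbol{e}_0)|_\gamma=0$ of the prescribed differential; and $-\omega|_\gamma$ is $\mathfrak{so}(1,3)$-valued, so $\Lambda$ genuinely lands in $SO(1,3)$. Hence $d\Lambda|_\gamma=-\omega|_\gamma$ and therefore $\omega'|_\gamma=0$, i.e. every $\boldsymbol{\gamma}_{\cdot\mu\nu}^{\alpha\cdot\cdot}|_\gamma=0$, which is the coefficient-vanishing condition of the statement. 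Because $\Lambda=\mathrm{Id}$ on $\gamma$, the rotated tetrad still satisfies orthonormality and $\boldsymbol{e}_0\propto\gamma_*$ there, so nothing from the first stage is spoiled.

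Finally, to see that the resulting frame is genuinely anholonomic I would read the structure functions off the frame torsion formula Eq.(\ref{TORSIONC}): with all $\boldsymbol{\gamma}_{\cdot\mu\nu}^{\alpha\cdot\cdot}|_\gamma=0$ it collapses to $\boldsymbol{c}_{\cdot\mu\nu}^{\lambda\cdot\cdot}|_\gamma=\boldsymbol{T}_{\cdot\mu\nu}^{\lambda\cdot\cdot}|_\gamma$, so the anholonomy of the inertial frame along $\gamma$ equals the torsion of $D$ and is nonzero whenever the torsion is, which is the generic Riemann-Cartan case; this is why $\langle\boldsymbol{e}_\mu\rangle$ cannot be a coordinate basis. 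The one point that needs real care --- the ``hard part'' --- is the extension step: one must verify that the transverse data $-\omega(\boldsymbol{e}_i)|_\gamma$ can be prescribed freely and consistently along the curve while keeping $\Lambda$ group-valued. Both the tangential consistency ($\omega(\gamma_*)|_\gamma=0$) and the Lie-algebra membership come directly from the parallel-transport stage together with metric compatibility, which is precisely why the two stages must be carried out in this order.
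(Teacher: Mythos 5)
Your proof is correct and arrives at the same object as the paper, but it organizes the construction differently and is more explicit at the one genuinely delicate step. The paper's own proof writes down the full system $\partial(\Lambda_{\kappa}^{\mu})/\partial\zeta^{\alpha}=-\boldsymbol{\Gamma}_{\cdot\alpha\nu}^{\mu\cdot\cdot}\Lambda_{\kappa}^{\nu}$ (Eq.(\ref{de})) for \emph{all} directions $\alpha$ and appeals to \cite{iliev2} for its solvability ``with given boundary conditions''; read on an open neighbourhood that system has the curvature as an integrability obstruction, so what is actually being used is solvability of the $1$-jet of $\boldsymbol{\Lambda}$ along the curve only --- which is precisely the point you isolate. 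Your split into (i) $D$-parallel transport of an initial orthonormal tetrad along $\gamma$ (legitimate by Proposition \ref{existthetas}, and preserving orthonormality and $\boldsymbol{e}_{0}\propto\gamma_{\ast}$ by metric compatibility and autoparallelism, hence killing the tangential coefficients $\boldsymbol{\gamma}_{\cdot 0\nu}^{\alpha\cdot\cdot}|_{\gamma}$ in the notation of Eq.(\ref{Dsb})), followed by (ii) a Lorentz-valued gauge transformation equal to the identity on $\gamma$ with prescribed transverse first derivatives, makes the two consistency requirements --- vanishing of $\omega(\gamma_{\ast})|_{\gamma}$ and $\mathfrak{so}(1,3)$-valuedness of $\omega$ --- visible and verifiable, whereas the paper leaves them buried in the citation; moreover the paper's closing step (``parallel transport the basis along $\gamma$'') by itself only yields $D_{\boldsymbol{e}_{0}}\boldsymbol{e}_{\mu}|_{\gamma}=0$, and it is exactly your stage (ii) that supplies the transverse statement $D_{\boldsymbol{e}_{i}}\boldsymbol{e}_{\mu}|_{\gamma}=0$. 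Two cosmetic caveats: the closed form $\boldsymbol{\Lambda}=\exp(-\omega(\boldsymbol{e}_{i})|_{\gamma(s)}y^{i})$ tacitly assumes adapted coordinates with $dy^{i}|_{\gamma}=\boldsymbol{\theta}^{i}|_{\gamma}$ (otherwise prescribe the coordinate components of $\omega$ instead), and your identification of the anholonomy with the torsion on $\gamma$ follows from Eq.(\ref{TORSIONC}) together with the antisymmetry of $\boldsymbol{c}_{\cdot\mu\nu}^{\lambda\cdot\cdot}$ (the paper's subsequent remark states it with the opposite sign convention); either way your conclusion that the frame cannot be holonomic wherever the torsion is nonzero on $\gamma$ is the intended content of the final clause of the proposition.
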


\begin{proof}
Indeed, at that any point $p\in U\subset M$ given $(\varphi,U)$, a
(\textrm{nccb}$\mid p$), with bases $\langle\partial/\partial\zeta^{\mu
}\rangle$ and $\langle d\zeta^{\mu}\rangle$ for $TU$ and $T^{\ast}U$ we can
find\ given an arbitrary vector field $\boldsymbol{X}$, a non coordinate basis
$\langle\boldsymbol{e}_{\mu}\rangle$ and $\langle\boldsymbol{\theta}^{\mu
}\rangle$ for $TU$ and $T^{\ast}U$ by finding a solution $\boldsymbol{\Lambda
}$ to the matrix equation\footnote{$\boldsymbol{X}(\boldsymbol{\Lambda})$ is
the matrix with entries $\boldsymbol{X}(\boldsymbol{\Lambda}_{\nu}^{\mu
})=\mathbf{X}^{\alpha}\partial/\partial\zeta^{\alpha}(\boldsymbol{\Lambda
}_{\nu}^{\mu})$.},%
\begin{equation}
0=\boldsymbol{\Gamma}_{\boldsymbol{X}}^{\prime}=\boldsymbol{\Lambda
}_{\boldsymbol{X}}^{-1}\boldsymbol{\Gamma\Lambda+\Lambda}^{-1}\boldsymbol{X}%
(\boldsymbol{\Lambda}),\label{meq}%
\end{equation}
satisfying the conditions
\begin{gather}
\lbrack\boldsymbol{e}_{\mu},\boldsymbol{e}_{\nu}]=\boldsymbol{c}_{\cdot\mu\nu
}^{\beta\cdot\cdot}\boldsymbol{e}_{\beta},\nonumber\\
\boldsymbol{c}_{\cdot\mu\nu}^{\beta\cdot\cdot}=\Lambda_{\mu}^{\alpha
}\boldsymbol{e}_{\alpha}(\Lambda_{\mu}^{\beta})-\Lambda_{\nu}^{\alpha
}\boldsymbol{e}_{\alpha}(\Lambda_{\mu}^{\beta}).
\end{gather}
where the $\boldsymbol{c}_{\cdot\mu\nu}^{\beta\cdot\cdot}$ are not all null
and where the matrix function $\boldsymbol{\Lambda}$ with entries
$\Lambda_{\mu}^{\nu}$ is defined by
\begin{gather}
D_{\boldsymbol{X}}\boldsymbol{e}_{\nu}:=(\boldsymbol{\Gamma}_{\boldsymbol{X}%
})_{\nu}^{\mu}\boldsymbol{e}_{\mu},\text{ \ \ }D_{\boldsymbol{X}%
}\boldsymbol{e}_{\nu}^{\prime}:=(\boldsymbol{\Gamma}_{\boldsymbol{X}}^{\prime
})_{\nu}^{\mu}\boldsymbol{e}_{\mu}^{\prime},\nonumber\\
\boldsymbol{e}_{\mu}=\Lambda_{\mu}^{\nu}e_{\nu},\text{ \ \ }\boldsymbol{\theta
}^{\mu}=(\Lambda^{-1})_{\nu}^{\mu}\boldsymbol{\theta}^{\nu}.\label{dw}%
\end{gather}
To accomplish our enterprise we choose at an arbitrary $p_{o}\in\gamma$ normal
coordiante functions such that
\[
\boldsymbol{e}_{0}(p_{o})=\frac{1}{\sqrt{g_{00}(p_{o})}}\left.  \gamma_{\ast
}\right\vert _{p_{o}}=\left.  \partial/\partial\zeta^{0}\right\vert _{p_{o}}%
\]
and recall that from Proposition \ref{existthetas} there exists
$\boldsymbol{e}_{i}(p_{o})$, $i=1,2,3$ that together with $\boldsymbol{e}%
_{0}(p_{o})$ satisfy $\left.  \boldsymbol{g}(\boldsymbol{e}_{\mu
},\boldsymbol{e}_{\nu})\right\vert _{p_{o}\in\gamma}=\eta_{\mu\nu}$. So, our
task is simply reduced to find solutions for Eq.(\ref{meq})\footnote{If these
solutions result in a set of non orthonormal frames we get from them an
orthonormal frame by standard procedures.}. Now, taking $\boldsymbol{X}$
$=\partial/\partial\zeta^{\alpha}$ we have that $(\boldsymbol{\Gamma
}_{\boldsymbol{X}})_{\nu}^{\mu}:=\boldsymbol{\Gamma}_{\cdot\alpha\nu}%
^{\mu\cdot\cdot}$ and Eq.(\ref{meq}) is the system of differential equations%
\begin{equation}
\partial/\partial\zeta^{\alpha}\left(  \Lambda_{\kappa}^{\mu}\right)
=-\boldsymbol{\Gamma}_{\cdot\alpha\nu}^{\mu\cdot\cdot}\Lambda_{\kappa}^{\nu
}\label{de}%
\end{equation}
whose solution with given boundary conditions is well known \cite{iliev2}.
Once that solution is known we have that $\left.  D_{\boldsymbol{e}_{\nu}%
}\boldsymbol{e}_{\mu}\right\vert _{p_{0}}=0$ and thus we construct the
\textbf{IMF}$\gamma$ by simply parallel transporting the basis
$\{\boldsymbol{e}_{\mu}(p_{o})\}$ of $T_{p_{o}}M$ along $\gamma$, getting for
any $p\in\gamma$, $\left.  D_{\boldsymbol{e}_{\nu}}\boldsymbol{e}_{\mu
}\right\vert _{p\in\gamma}=0$.$\medskip$
\end{proof}

Taking into account the previous proposition we finally propose the following:

\begin{definition}
\label{LIRFRC}Given a timelike autoparallel line $\gamma\subset U$ $\subset
M$\ in a Riemann-Cartan spacetime structure and coordinate functions\emph{
}$\langle\zeta^{\mu}\rangle$ covering $U\subset M$\emph{\ }we say that a
reference frame $\boldsymbol{L}\in\sec TU$ is a \emph{local\ inertial
reference frame associated to }$\gamma$ \emph{(}\textbf{LIRF}$\gamma$\emph{)}
iff for all $p\in\gamma$ there exists exists a section $s$ of the orthogonal
frame bundle $FU\subset FM,U\subset M$,
\begin{equation}
s_{\gamma}=\{(\boldsymbol{e}_{0}(p),\boldsymbol{e}_{1}(p),\boldsymbol{e}%
_{2}(p),\boldsymbol{e}_{3}(p)),p\in U\subset M\}\subset s,
\end{equation}
where
\begin{gather}
\left.  \boldsymbol{e}_{0}\right\vert _{p\in\gamma}=\left.  \boldsymbol{L}%
\right\vert _{p\in\gamma}=\left.  \frac{\partial}{\partial\zeta^{0}%
}\right\vert _{p\in\gamma}=\left.  \gamma_{\ast}\right\vert _{p\in\gamma
},\nonumber\\
\left.  \boldsymbol{\omega}\right\vert _{p\in\gamma}=\left.  -\boldsymbol{T}%
^{0}\right\vert _{p\in\gamma},\nonumber\\
\left.  \boldsymbol{\Gamma}_{_{\cdot\mu\alpha}}^{\nu\cdot\cdot}\right\vert
_{p\in\gamma}=0,\label{irfrc}%
\end{gather}
Moreover we say that $\langle\zeta^{\mu}\rangle$ are inertial coordinates.
\end{definition}

\begin{remark}
Differently from the case of the \textbf{LIRF} in a Lorentzian spacetime, in a
general Riemann-Cartan spacetime we do not have $\left.  \boldsymbol{g}%
(\partial/\partial\zeta^{\mu},\partial/\partial\zeta^{\nu})\right\vert
_{p\in\gamma}=\eta_{\mu\nu}$ and $\left.  \partial g_{\alpha\beta}%
/\partial\zeta^{\mu}\right\vert _{p\in\gamma}=0$, for otherwise we get
$\left.  \mathbf{\mathring{\Gamma}}_{_{\cdot\mu\alpha}}^{\nu\cdot\cdot
}\right\vert _{p\in\gamma}=0$ which, as we saw above, implies a completely
antisymmetric torsion if we want $\left.  \mathbf{\Gamma}_{\cdot_{\mu\alpha}%
}^{\nu\cdot\cdot}(p)\right\vert _{p\in\gamma}=0$. Moreover, we observe that in
the basis $\{\boldsymbol{e}_{\mu}\}$ the components of the torsion tensor are
according to \emph{Eq.(\ref{TORSIONC})} $\left.  \boldsymbol{T}_{\cdot\mu\nu
}^{\alpha\cdot\cdot}\right\vert _{p\in\gamma}=\left.  -\boldsymbol{c}%
_{\cdot\mu\nu}^{\alpha\cdot\cdot}\right\vert _{p\in\gamma}$ and the components
of the Riemann curvature tensor are $\left.  \boldsymbol{R}_{\mu\cdot
\alpha\beta}^{\cdot\lambda\cdot\cdot}\right\vert _{p\in\gamma}=\left.
\boldsymbol{e}_{\alpha}(\boldsymbol{\gamma}_{\cdot\beta\mu}^{\lambda\cdot
\cdot})\right\vert _{p\in\gamma}-\left.  \boldsymbol{e}_{\beta}%
(\boldsymbol{\gamma}_{\cdot\alpha\mu}^{\lambda\cdot\cdot})\right\vert
_{p\in\gamma}$. Keep in mind that although $\left.  \boldsymbol{\gamma}%
_{\cdot\alpha\mu}^{\lambda\cdot\cdot}\right\vert _{\gamma}=0$, we have that
the Riemann curvature tensor is non null in all $p\in\gamma$ since $\left.
\boldsymbol{e}_{\alpha}(\boldsymbol{\gamma}_{\cdot\beta\mu}^{\lambda\cdot
\cdot})\right\vert _{p\in\gamma}\neq0$.
\end{remark}

\section{Equivalence Principle and Einstein's Most\newline Happy Thought}

At last we want to comment that, as well known, in Einstein's GR one can
easily distinguish (despite some claims on the contrary) in any \textit{real}
\textit{physical laboratory, (}i.e., not one modelled by a timelike worldline)
a true gravitational field from an acceleration field of a given reference
frame in Minkowski spacetime \cite{ohanian,rs2001}. This is because in GR the
\textit{mark} of a real gravitational field is the non null Riemann curvature
tensor of $\mathring{D}$, and the Riemann curvature tensor of $\overset{m}{D}$
(present in the definition of Minkowski spacetime) is null. However if we
interpret a gravitational field as the torsion $2$-forms of an
\emph{effective} teleparallel spacetime\footnote{A teleparallel spacetime is
one equipped with a metric compatible connections for which its Riemann
curvature tensor is null, but its torsion tensor is non null.}
$(M,\boldsymbol{\eta},\overset{e}{\nabla},\boldsymbol{\tau}_{\boldsymbol{\eta
}},\uparrow)$ viewed according to the ideas developed in \cite{fr2010,r2011}
as \emph{deformation} of Minkowski spacetime, then one can also interpret the
acceleration field of an accelerated reference frame in Minkowski spacetime as
generating an effective teleparallel spacetime $(M,\boldsymbol{\eta
},\overset{e}{\nabla},\boldsymbol{\tau}_{\boldsymbol{\eta}},\uparrow)$
structure. This can be done as follows. Let $\boldsymbol{Z}\in\sec TU$,
$U\subset M$ with $\boldsymbol{\eta}(\boldsymbol{Z},\boldsymbol{Z})=1$ an
\textit{accelerated reference frame} on Minkowski spacetime. This means as we
know from Section 3.1 that $\boldsymbol{a}=\overset{m}{D}_{\boldsymbol{Z}%
}\boldsymbol{Z}\neq0$. Put $\boldsymbol{e}_{\mathbf{0}}=\boldsymbol{Z}$ and
define an accelerated reference frame as \textit{non} trivial $\ $if
$\boldsymbol{\theta}^{0}=\boldsymbol{\eta}(\boldsymbol{e}_{0},)$ is not an
exact differential. Next recall that in $V\subset M$ there always exist three
other $\boldsymbol{\eta}$-orthonormal vector fields $\boldsymbol{e}_{i}$,
$i=1,2,3$ such that $\langle\boldsymbol{e}_{\mu}\rangle$ is an
$\boldsymbol{\eta}$-orthonormal \ basis for $TU$, i.e., $\boldsymbol{\eta
}=\eta_{\mathbf{\mu\nu}}\boldsymbol{\theta}^{\mu}\otimes\boldsymbol{\theta
}^{\nu}$, where $\langle\boldsymbol{\theta}^{\mu}\rangle$ is the dual
basis\footnote{In general we will also have that $d\boldsymbol{\theta}^{i}%
\neq0$, $i=1,2,3$.} of $\langle\boldsymbol{e}_{\mu}\rangle$. We then have,
$\overset{m}{D}_{\boldsymbol{e}_{\mathbf{\alpha}}}\boldsymbol{e}_{\beta
}=\boldsymbol{\gamma}_{\cdot\alpha\beta}^{\kappa\cdot\cdot}e_{\kappa}$,
$\overset{m}{D}_{\boldsymbol{e}_{\mathbf{\alpha}}}\boldsymbol{\theta}^{\beta
}=-\boldsymbol{\gamma}_{\cdot\alpha\kappa}^{\beta\cdot\cdot}\boldsymbol{\theta
}^{\kappa}$.

What remains in order to be possible to interpret an acceleration field as a
kind of\ `gravitational field' is to introduce on $M$ a $\eta$-metric
compatible connection $\overset{e}{\nabla}$ such that the $\{\boldsymbol{e}%
_{\mu}\}$ is teleparallel according to it, i.e., $\overset{e}{\nabla
}_{\boldsymbol{e}_{\alpha}}\boldsymbol{e}_{\beta}=0,\overset{e}{\nabla
}_{\boldsymbol{e}_{\alpha}}\boldsymbol{\theta}^{\beta}=0$. Indeed, with this
connection the structure $\langle M\simeq\mathbb{R}^{4},\boldsymbol{\eta
},\overset{e}{\nabla},\boldsymbol{\tau}_{\boldsymbol{\eta}},\uparrow\rangle$
has null Riemann curvature tensor but a non null torsion tensor, whose
components are related\footnote{The explict formulas can be easily derived
using the equations of section 4.5.8 of \cite{rodcap2007} which generalizes
for connections with non null \emph{nonmetricty} tensors Eq.(\ref{3d}).} with
the components of the acceleration $\boldsymbol{a}$ and with the other
coefficients $\boldsymbol{\gamma}_{\cdot\alpha\kappa}^{\beta\cdot\cdot}$ of
the connection $\overset{m}{D}$, which describe the motion on Minkowski
spacetime of a \textit{grid} represented by the orthonormal frame
$\langle\boldsymbol{e}_{\mu}\rangle$. Sch\"{u}cking \cite{schu} thinks that
such a description of the gravitational field makes Einstein most happy
though, i.e., the equivalence principle (understood as equivalence between
acceleration and gravitational field) a legitimate mathematical idea. However,
a \textit{true} gravitational field must satisfy (at least with good
approximation) Einstein equation or the \emph{equivalent} equation for the
tetrad fields $\langle\boldsymbol{e}_{\mu}\rangle$ \cite{fr2010,r2011},
whereas there is no single reason for an acceleration field to satisfy that equation.

\section{Conclusions}

In this paper we have recalled the definitions of observers, reference frames
and naturally adapted coordinate chart\ to a given reference frame. Equipped
with these definitions and some basic results such as the proper meaning of an
inertial reference frame in Minkowski spacetime and the notion of
pseudo-inertial reference frames and locally inertial reference frames in a
Lorentzian spacetime, we showed how to define consistently \emph{locally
inertial reference systems} in a general Riemann-Cartan spacetime structure
$\langle M,\boldsymbol{g},D,\boldsymbol{\tau}_{\boldsymbol{g}},\uparrow
\rangle$. We proved that a set of normal coordinate functions $\langle
\zeta^{\mu}\rangle$ covering a timelike autoparallel do not automatically
define a \textbf{LIRF }in $\langle M,\boldsymbol{g},D,\boldsymbol{\tau
}_{\boldsymbol{g}},\uparrow\rangle$\textbf{ }as it is the case in a Lorentzian
spacetime (recall section 4.2), but the coordinate basis $\langle
\partial/\partial\zeta^{\mu}\rangle$ associated to the normal coordiante
functions $\langle\zeta^{\mu}\rangle$ can be used to define a \textbf{LIRF}%
\ (Definition \ref{LIRFRC}) once we take into account Proposition \ref{rc}. We
also briefly recalled how the concepts of \textbf{LIRF} in Lorentzian and
Riemann-Cartan spacetimes are linked to \textquotedblleft Einstein's most
happy thought\textquotedblright, i.e., the equivalence principle. Summing up
we think that our paper complement and help to clarify presentations of
related issues appearing in excellent papers
\cite{abp2002,hartley,iliev1,iliev2,iliev3,iliev4,nester,vdH}, besides
clarifying some misconceptions like the ones in
\cite{fabbri1,fabbri2,lang,soc, schu} as exposed above.

\end{document}